\theoremstyle{plain}
\newcommand{\cleqn}{\setcounter{equation}{0}}
\newcommand{\clth}{\setcounter{theorem}{0}}
\newcommand {\sectionnew}[1]{\section{#1}\cleqn\clth}
\newtheorem{theorem}{Theorem}[section]
\newtheorem{lemma}[theorem]{Lemma}
\newtheorem{definition-theorem}[theorem]{Definition-Theorem}
\newtheorem{proposition}[theorem]{Proposition}
\newtheorem{corollary}[theorem]{Corollary}
\newtheorem{definition}[theorem]{Definition}
\newtheorem{example}[theorem]{Example}
\newtheorem{remark}[theorem]{Remark}
\newtheorem{notation}[theorem]{Notation}
\newtheorem{assumption}[theorem]{Assumption}
\newtheorem{lemma-definition}[theorem]{Lemma-Definition}
\newtheorem{lemma-notation}[theorem]{Lemma-Notation}
\newtheorem{question}[theorem]{Question}
\newtheorem{remark-definition}[theorem]{Remark-Definition}
\newcommand \bth[1] { \begin{theorem}\label{t#1} }
\newcommand \ble[1] { \begin{lemma}\label{l#1} }
\newcommand \bpr[1] { \begin{proposition}\label{p#1} }
\newcommand \bco[1] { \begin{corollary}\label{c#1} }
\newcommand \bde[1] { \begin{definition}\label{d#1}\rm }
\newcommand \bex[1] { \begin{example}\label{e#1}\rm }
\newcommand \bre[1] { \begin{remark}\label{r#1}\rm }
\newcommand \bnota[1] {\begin{notation}\label{n#1}\rm }
\newcommand \bas[1] { \begin{assumption}\label{a#1}\rm }
\newcommand \bqu[1] { \begin{question}\label{q#1}\rm }
\newcommand {\ele} { \end{lemma} }
\newcommand {\epr} { \end{proposition} }
\newcommand {\eco} { \end{corollary} }
\newcommand {\ede} { \end{definition} }
\newcommand {\eex} { \end{example} }
\newcommand {\ere} { \end{remark} }
\newcommand {\enota} { \end{notation} }
\newcommand {\eas} {\end{assumption}}
\newcommand {\equ} {\end{question}}
\newcommand \lb[1]{\label{#1}}
\def \a  {\mathfrak{a}}
\def \gl {\mathfrak{gl}}
\DeclareMathOperator \ad { {\mathrm{ad}} }
\newcommand{\beqa}{\begin{eqnarray*}}                     %added by Lu
\newcommand{\eeqa}{\end{eqnarray*}}
\def \sC {{\scriptscriptstyle C}}
\def \wF_mn {\wF_m \times \wF_n}
\def \wF_mnC {\wF_{m, n, \, \sC}}
\def \wF {\widetilde{F}}
\begin{document}

\setlength{\baselineskip}{1.2\baselineskip}
%%%%%%%%%%%%%%%%%%%%%%%%%%%%%%%%%%%%%%%%%%%%%%%%%%%%%%%%%%%%%%%%%%%%%%%%%%%
%%%%%%%%%%%%%%%%%%%%%%    Title    %%%%%%%%%%%%%%%%%%%%%%%%%%%%%%%%%%%%%%%%
\title[From $\mathcal{O}$-operators of 3-Dimensional 3-Lie algebras to 3-Pre-Lie algebras]
{From $\mathcal{O}$-operators of  3-Dimensional 3-Lie algebras to low-dimensional 3-Pre-Lie algebras}
\author{ Ziying Cheng}
\address{
School of Mathematical Sciences \\
Zhejiang Normal University \\
Jinhua 321004              \\
China}
\email{chengyy@zjnu.edu.cn}

\author{Chuangchuang Kang}
\address{
School of Mathematical Sciences \\
Zhejiang Normal University \\
Jinhua 321004              \\
China}

\email{kangcc@zjnu.edu.cn}
\author{Jiafeng L\"u}
\address{
School of Mathematical Sciences    \\
Zhejiang Normal University\\
Jinhua 321004              \\
China}
\email{jiafenglv@zjnu.edu.cn}
\date{}
\begin{abstract}

In this paper, we explicitly determine all $\mathcal{O}$-operators with respect to the adjoint representation of 3-dimensional complex 3-Lie algebras. Furthermore, we provide
the induced 3-Pre-Lie algebra structures and the corresponding solutions of the 3-Lie classical Yang-Baxter equation in the 6-dimensional 3-Lie algebras  $A\ltimes_{\ad^*} A^*$.
\end{abstract}

\subjclass[2010]{16T25, 17A30, 17A42, 81T30}

\keywords{$\mathcal{O}$-operators, 3-Pre-Lie algebras, 3-Lie classical Yang-Baxter equation, 3-Lie algebras}

\maketitle
%\tableofcontents
%%%%%%%%%%%%%%%%%%%%   Introduction   %%%%%%%%%%%%%%%%%%%%%%%%%%%%%%%%%%%%%%%%
\allowdisplaybreaks
\sectionnew{Introduction}\lb{intro}

In 1985, Filippov \cite{Filippov} introduced the definition of $n$-Lie algebras (also called Filippov algebras). His paper considered $n$-ary multi-linear and skew-symmetric operation that satisfy the generalized Jacobi identity, which appear in many fields in mathematics and mathematical physics. In particular, $3$-Lie algebras play an important role in string theory \cite{Bagger,Gustavsson,Matsuo,Matsuo2}, and represent the algebraic structure associated with Nambu mechanics \cite{Nambu,Takhtajan}.  See the review article \cite{$n$-ary} for more details.

In \cite{Bai-1}, the authors introduced Rota-Baxter 3-Lie algebras with weight $\lambda$ and showed that they can be derived from Rota-Baxter Lie algebras and Pre-Lie algebras and from Rota-Baxter commutative associative algebras with derivations. It is closely related to $\mathcal{O}$-operators associated to the representation of a 3-Lie algebra \cite{Bai-2}, which are also known as relative Rota-Baxter operators with weight zero \cite{Chen}. In particular, an $\mathcal{O}$-operator  with respect to the adjoint representation of a 3-Lie algebra is exactly a Rota-Baxter 3-Lie algebra with weight zero. It is connected to the 3-Lie classical Yang-Baxter equation (3-Lie CYBE), 3-Pre-Lie algebra, 3-Post-Lie algebras and $3$-Lie-dendriform algebra \cite{Bai-2,Hou,Chtioui}. For example, a 3-Pre-Lie algebra and the solution of 3-Lie CYBE  can be obtained through the $\mathcal{O}$-operator with respect to  a representation of a 3-Lie algebra. Furthermore, based on the differential graded Lie algebra that controls deformations of an $n$-Lie algebra with a representation, the authors study the deformation and cohomology theory of relative Rota-Baxter operators  on 3-Lie algebras with weight $\lambda=0$ and $\lambda\neq 0$ \cite{Tang,Guo-Qin}.

The  concept of a 3-Pre-Lie algebra was introduced in \cite{Pei-B-G}. It is a natural construction that involves the splitting of operads of Lie algebra and applying them to the 3-Lie algebra,
which  has similar fundamental properties with Pre-Lie algebras \cite{Bai-3}. For example,
a 3-Pre-Lie algebra  gives a 3-Lie algebra and its left multiplication operator gives rise to a representation of this 3-Lie algebra.  Conversely,  the $\mathcal{O}$-operator with respect to  a representation of a 3-Lie algebra gives rise to a 3-Pre-Lie algebra. Therefore, it is worthwhile to find new 3-Pre-Lie algebras. See the \cite{Bai-2} for more details.

 There
are three ways to construct 3-Pre-Lie algebra structures:  utilizing the Pre-Lie algebras equipped with a generalized trace function \cite{Chtioui},  computing the structure constants \cite{Du}, and applying $\mathcal{O}$-operators to generate new instances of  3-Pre-Lie algebras \cite{Kang}.
In this paper, based on the classification results of 3-dimensional complex 3-Lie algebras in \cite{Bai-Song-Zhang}, our goal is to provide all $\mathcal{O}$-operators on 3-dimensional complex 3-Lie algebras. From these operators, we obtain examples of 3-Pre-Lie algebras and present the corresponding solutions of the 3-Lie CYBE.

The paper is organized as follows. In Section 2, we give some elementary facts on 3-Lie
algebras,  $\mathcal{O}$-operators, $3$-Pre-Lie algebras and 3-Lie CYBE. In Section 3, we provide the sufficient and necessary condition for the $\mathcal{O}$-operator on a 3-Lie algebra, and then we give the classification theorem (Theorem \ref{O-operqtor}) of $\mathcal{O}$-operator on 3-dimensional complex 3-Lie algebra. In
Section 4, we  give the sufficient and necessary condition for $3$-Pre-Lie algebra.
We  prove that the two-dimensional 3-Pre-Lie algebras is trivial (Thorem \ref{1111}), and get thirty-one examples of 3-dimensional 3-Pre-Lie algebra by the $\mathcal{O}$-operators (Theorem \ref{3-pre-Lie example}). In Section 5, we obtain thirty-one  skew-symmetric solutions  of 3-Lie  CYBE in  $A\ltimes_{\ad^*} A^*$ (Theorem \ref{thm:3-CYBE}). Section 6 is the proof of Theorem \ref{O-operqtor}.

Throughout this paper,  all algebras are
of finite dimension and over the complex field $\mathbb{C}$, unless otherwise stated.

\sectionnew{Preliminaries}\lb{intro}
In this section, we recall notions and results on 3-Lie algebras,  $\mathcal{O}$-operators and $3$-Pre-Lie algebras which will be needed later in the paper.

\begin{definition} A \textbf{3-Lie algebra} is a pair $(A,[\cdot,\cdot,\cdot])$, where $A$ is a vector space and $[\cdot ,\cdot,\cdot]$ : $\otimes^3A \rightarrow A$ is a skew-symmetric linear map such that the following \textbf{Fundamental Identity} holds:
\begin{equation}
 \left[x_1,x_2,[x_3,x_4,x_5]\right] = [[x_1,x_2,x_3],x_4,x_5]+[x_3,[x_1,x_2,x_4],x_5]+[x_3,x_4,[x_1,x_2,x_5]],
 \end{equation}
 where $x_i\in A,~1 \leq i \leq 5$.
\end{definition}
\begin{proposition}\label{prop:3d-3Lie}{\rm(\cite{Bai-Song-Zhang})}
There is a unique non-trivial 3-dimensional complex 3-Lie algebra. It has a basis ${e_1,e_2,e_3}$ and the non-zero product is given by
\begin{eqnarray}\label{3-3-Lie algebra}
  [e_1,e_2,e_3] = e_1.
\end{eqnarray}

\end{proposition}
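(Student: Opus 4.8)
The plan is to reduce the classification to a single vector together with one determinant identity. First I would observe that, since $[\cdot,\cdot,\cdot]$ is skew-symmetric and trilinear and $\dim A = 3$, the structure is completely determined by the single value $v := [e_1,e_2,e_3] \in A$. Indeed, letting $\omega$ denote the determinant form in the basis $\{e_1,e_2,e_3\}$ (the unique, up to scalar, alternating $3$-form on $A$), expanding each argument in the basis gives
$$[x,y,z] = \omega(x,y,z)\, v \qquad \text{for all } x,y,z \in A.$$
Conversely, every $v \in A$ defines such a skew-symmetric trilinear map, so the set of candidate structures is parametrized by the vectors of $A$.

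Next I would verify that the Fundamental Identity holds automatically for every $v$, so that each $v$ genuinely yields a $3$-Lie algebra. Substituting the formula above, the left-hand side equals $\omega(x_3,x_4,x_5)\,\omega(x_1,x_2,v)\,v$, while the right-hand side equals $\bigl(\omega(v,x_4,x_5)\omega(x_1,x_2,x_3)+\omega(x_3,v,x_5)\omega(x_1,x_2,x_4)+\omega(x_3,x_4,v)\omega(x_1,x_2,x_5)\bigr)v$. These agree by Cramer's rule: for any vectors one has the polynomial identity
$$\omega(x_3,x_4,x_5)\, w = \omega(w,x_4,x_5)\, x_3 + \omega(x_3,w,x_5)\, x_4 + \omega(x_3,x_4,w)\, x_5,$$
and applying the linear functional $\omega(x_1,x_2,\cdot)$ to both sides with $w=v$ produces exactly the required equality of scalar coefficients. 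Hence the bracket is a $3$-Lie algebra for every $v$, and it is non-trivial precisely when $v \neq 0$.

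Finally I would classify these structures up to isomorphism. A map $\phi \in \mathrm{GL}(A)$ is an isomorphism from the structure with vector $v_1$ to the one with vector $v_2$ iff $\phi([x,y,z]_1)=[\phi x,\phi y,\phi z]_2$ for all $x,y,z$; substituting the determinant formula and cancelling the generic factor $\omega(x,y,z)$ reduces this to the single condition $\phi(v_1)=\det(\phi)\,v_2$. Given $v_1,v_2 \neq 0$, I would choose any $\psi \in \mathrm{GL}(A)$ with $\psi(v_1)=v_2$ and set $\phi = c\,\psi$; then $\phi(v_1)=c\,v_2$ and $\det(\phi)=c^3\det(\psi)$, so the isomorphism condition becomes $c^{2}\det(\psi)=1$, which is solvable for $c \in \Cset^{\times}$. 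Thus all non-trivial structures are isomorphic, and taking $v=e_1$ yields the normal form $[e_1,e_2,e_3]=e_1$.

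The routine-looking but genuinely essential step is the automatic validity of the Fundamental Identity; the cleanest route is to recognize it as Cramer's expansion paired with the functional $\omega(x_1,x_2,\cdot)$, rather than grinding through the fivefold multilinear check on basis vectors. The remaining work, namely solving $c^{2}\det(\psi)=1$, is unobstructed over $\Cset$, which is the only point at which the hypothesis on the ground field is used.
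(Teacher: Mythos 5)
Your proof is correct, but note that the paper does not prove this proposition at all: it is quoted verbatim from the classification of low-dimensional $n$-Lie algebras in the cited reference of Bai, Song and Zhang, where the result is obtained by a structure-constant case analysis (essentially splitting according to the dimension of the derived algebra and normalizing bases case by case). Your argument is a genuinely different, coordinate-free route, and each of its three steps checks out: the identification of skew-symmetric trilinear brackets on a $3$-dimensional space with vectors $v$ via $[x,y,z]=\omega(x,y,z)v$ is exactly the isomorphism $\Hom(\Lambda^3 A,A)\cong A$; the Cramer identity $\omega(x_3,x_4,x_5)\,w=\omega(w,x_4,x_5)\,x_3+\omega(x_3,w,x_5)\,x_4+\omega(x_3,x_4,w)\,x_5$ is the vanishing of the alternating quadrilinear map (i.e.\ $\Lambda^4$ of a $3$-dimensional space is zero), and pairing it with $\omega(x_1,x_2,\cdot)$ does yield the Fundamental Identity for every $v$, so no case-checking is needed; and the isomorphism condition correctly reduces to $\phi(v_1)=\det(\phi)\,v_2$, with the rescaling equation $c^2\det(\psi)=1$ solvable over $\mathbb{C}$. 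What your approach buys over the cited computation is twofold: it generalizes verbatim to $n$-ary brackets on an $n$-dimensional space (where the solvability condition becomes $c^{\,n-1}\det(\psi)=1$, again unobstructed over $\mathbb{C}$), and it isolates precisely where the hypothesis on the ground field enters, which the structure-constant method obscures. One small point worth making explicit if you write this up: the Cramer identity should be justified as a polynomial identity (both sides are multilinear and agree whenever $x_3,x_4,x_5$ form a basis, hence agree identically by Zariski density, or directly because the alternation of a $4$-linear form in dimension $3$ vanishes), since for degenerate triples Cramer's rule itself does not literally apply.
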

\begin{definition}{\rm(\cite{Kasymov})} Let $V$ be a vector space. A \textbf{representation} of a 3-Lie algebra $(A,[\cdot,\cdot,\cdot])$ on $V$ is a skew-symmetric linear map $\rho:\otimes^2\rightarrow \gl(V)$ such that for any $x_1, x_2, x_3, x_4, \in A$,
 \begin{eqnarray}
    &&\rho(x_1,x_2)\rho(x_3,x_4)-\rho(x_3,x_4)\rho(x_1,x_2)=\rho([x_1,x_2,x_3],x_4)-\rho([x_1,x_2.x_4],x_3), \\
  && \rho([x_1,x_2,x_3],x_4)=\rho(x_1,x_2)\rho(x_3,x_4)+\rho(x_2,x_3)\rho(x_1,x_4)+\rho(x_3,x_1)\rho(x_2,x_4).
 \end{eqnarray}
\end{definition}
\begin{definition}
Let $(V,\rho)$  be a representation of a 3-Lie algebra $(A,[\cdot,\cdot,\cdot])$. Define $\rho^*:\otimes^2A\rightarrow \gl(V^*)$ by
\begin{eqnarray*}
  \langle\rho^*(x_1,x_2)\alpha,v\rangle=-\langle\alpha,\rho(x_1,x_2)v\rangle, \quad \forall~  \alpha \in V^*, x_1,x_2\in A, v \in V.
\end{eqnarray*}
Then we  call $(V^*,\rho^*)$   the {\bf dual representation} of $(V,\rho)$.
\end{definition}
\begin{definition}
Let $(A,[\cdot,\cdot,\cdot])$ be a 3-Lie algebra, if the linear map $\ad:\otimes^2A \rightarrow \gl(A)$ satisfies for any $x_1, x_2, x \in A$,
\begin{eqnarray}
  \ad_{x_1,x_2}:A\rightarrow A ,\quad \ad_{x_1,x_2}x=[x_1,x_2,x],
\end{eqnarray}
then we call $(A,\ad)$ the {\bf adjoint representation} of $(A,[\cdot,\cdot,\cdot])$. The dual representation $(A^*,\ad^*)$ of  the adjoint representation $(A,\ad)$ is called the {\bf coadjoint representation}.
\end{definition}

\begin{definition} {\rm(\cite{Bai-2})}
Let A be a vector space with a linear map $ \{ \cdot ,\cdot,\cdot\}: A \otimes A \otimes A \rightarrow A$. The pair $(A, \{ \cdot ,\cdot,\cdot\} )$ is called a 3-\textbf {Pre-Lie algebra} if the following identities hold:
\begin{eqnarray}
  \{x,y,z\} &=& -\{y,x,z\}, \label{eq:skew-sym} \\
  \{x_1,x_2,\{x_3,x_4,x_5\}\} &=& \{[x_1,x_2,x_3]^c,x_4,x_5\}+\{x_3,[x_1,x_2,x_4]^c,x_5\}\label{eq:multi-11} \\
  \nonumber&&+ \{x_3,x_4,\{x_1,x_2,x_5\}\}, \\
  \{[x_1,x_2,x_3]^c,x_4,x_5\} &=& \{x_1,x_2,\{x_3,x_4,x_5\}\}+\{x_2,x_3,\{x_1,x_4,x_5\}\}\label{eq:multi-12}\\
  \nonumber&&+\{x_3,x_1,\{x_2,x_4,x_5\}\},
\end{eqnarray}
where $x,y,z,x_i\in A,~1 \leq i \leq 5$ and $[\cdot,\cdot,\cdot]^c $ is defined by
\begin{eqnarray}\label{eq:ad-3Lie}
  [x,y,z]^c=\{x,y,z\}+\{y,z,x\}+\{z,x,y\},\quad\forall~ x,y,z \in A.
\end{eqnarray}
\end{definition}
Moreover, let $(A, \{ \cdot ,\cdot,\cdot\} )$ be a  3-Pre-Lie algebra, then \eqref{eq:ad-3Lie} defines a  3-Lie algebra, which is called the sub-adjacent 3-Lie algebra of $(A, \{ \cdot ,\cdot,\cdot\} )$.

\begin{definition} {\rm(\cite{Bai-2})}
Let $(A, [\cdot,\cdot,\cdot])$ be a  3-Lie algebra and $(V,\rho)$ a representation. A linear operator $T:V\rightarrow$ A is called an $\mathcal{O}$-operator associated to $(V,\rho)$ if for all $ x_1, x_2, x_3 \in V$,  $T$ satisfies
\begin{equation}
   [Tx_1,Tx_2,Tx_3]=T\left(\rho(Tx_1,Tx_2)x_3+\rho(Tx_2,Tx_3)x_1+\rho(Tx_3,Tx_1)x_2\right).\label{eq:O-operator1.0}
\end{equation}
\end{definition}
Especially, for all $x_1, x_2, x_3 \in A$,   if $\rho$ is  the adjoint representation of $A$, then \eqref {eq:O-operator1.0} can be rewritten to
\begin{equation}
   [Tx_1,Tx_2,Tx_3]=T\left([Tx_1,Tx_2,x_3]+[Tx_2,Tx_3,x_1]+[Tx_3,Tx_1,x_2]\right).\label{eq:O-operator1.11}
\end{equation}
\begin{proposition}\label{O-3preLie}{\rm(\cite{Bai-2})}
Let $(A, [\cdot,\cdot,\cdot])$ be a  3-Lie algebra and $(V,\rho)$ a representation. Suppose that the linear map  $T:V\rightarrow$ A is  an $\mathcal{O}$-operator  associated to $(V,\rho)$. Then there exists a 3-Pre-Lie algebra structure on V given by
\begin{eqnarray} \label{eq:O-operator1.2}
   \{u,v,w\}=\rho(Tu,Tv)w, \quad\forall~u,v,w \in V.
\end{eqnarray}
\end{proposition}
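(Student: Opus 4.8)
The plan is to verify the three defining identities of a 3-Pre-Lie algebra for the operation $\{u,v,w\}=\rho(Tu,Tv)w$, reducing each to structure already available: the skew-symmetry of $\rho$, the two representation axioms, and the $\mathcal{O}$-operator identity \eqref{eq:O-operator1.0}. The skew-symmetry \eqref{eq:skew-sym} is immediate, since $\rho$ is skew-symmetric and hence $\rho(Tu,Tv)=-\rho(Tv,Tu)$, giving $\{u,v,w\}=-\{v,u,w\}$.

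The conceptual heart of the argument is a single translation identity. First I would compute the sub-adjacent bracket \eqref{eq:ad-3Lie} of the candidate structure, obtaining $[u,v,w]^c=\rho(Tu,Tv)w+\rho(Tv,Tw)u+\rho(Tw,Tu)v$. Applying $T$ and comparing with the defining equation \eqref{eq:O-operator1.0} of an $\mathcal{O}$-operator yields $T[u,v,w]^c=[Tu,Tv,Tw]$; that is, $T$ carries the sub-adjacent bracket on $V$ into the 3-Lie bracket on $A$. This is the bridge that converts the bracketed slots appearing in \eqref{eq:multi-11} and \eqref{eq:multi-12} into expressions in the ambient 3-Lie algebra, where the representation axioms can be invoked.

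With this relation in hand, I would expand both sides of \eqref{eq:multi-11}. The left-hand side equals $\rho(Tx_1,Tx_2)\rho(Tx_3,Tx_4)x_5$, and the last right-hand term $\{x_3,x_4,\{x_1,x_2,x_5\}\}=\rho(Tx_3,Tx_4)\rho(Tx_1,Tx_2)x_5$ combines with it to produce the operator difference $\rho(Tx_1,Tx_2)\rho(Tx_3,Tx_4)x_5-\rho(Tx_3,Tx_4)\rho(Tx_1,Tx_2)x_5$; the two remaining right-hand terms become $\rho([Tx_1,Tx_2,Tx_3],Tx_4)x_5$ and $-\rho([Tx_1,Tx_2,Tx_4],Tx_3)x_5$ after inserting the translation identity and the skew-symmetry of $\rho$. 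The resulting equation is exactly the first representation axiom evaluated at $(Tx_1,Tx_2,Tx_3,Tx_4)$ and applied to $x_5$. Similarly, expanding \eqref{eq:multi-12} and applying the translation identity to the single bracketed term on the left, the identity collapses term-by-term to the second representation axiom at $(Tx_1,Tx_2,Tx_3,Tx_4)$ applied to $x_5$.

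Since each required identity reduces to data already in hand, there is no genuine conceptual obstacle; the work is organizational. The only delicate point --- and the sole place a sign error could enter --- is the careful matching of argument-orders in \eqref{eq:multi-11}, where one must invoke the skew-symmetry of $\rho$ together with the translation identity inside the bracketed slots to align the expansion with the representation axiom.
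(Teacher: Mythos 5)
The paper states this proposition without proof, citing \cite{Bai-2}, so there is no internal proof to compare against; your argument is correct and is essentially the standard one from that reference. Skew-symmetry \eqref{eq:skew-sym} is immediate, your translation identity $T[u,v,w]^c=[Tu,Tv,Tw]$ is exactly the $\mathcal{O}$-operator condition \eqref{eq:O-operator1.0} applied to the cyclic sum, and with it both \eqref{eq:multi-11} and \eqref{eq:multi-12} collapse precisely to the two representation axioms evaluated at $(Tx_1,Tx_2,Tx_3,Tx_4)$ and applied to $x_5$, with the sign in the second term of \eqref{eq:multi-11} handled correctly via $\rho(Tx_3,[Tx_1,Tx_2,Tx_4])=-\rho([Tx_1,Tx_2,Tx_4],Tx_3)$.
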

\begin{remark}\label{rmk:O-and-3-pre}
Let $(A, [\cdot,\cdot,\cdot])$ be a  3-Lie algebra and T the $\mathcal{O}$-operator associated with the adjoint representation $(A, \ad)$.
 Then there exists a 3-Pre-Lie algebra structure on A given by
\begin{eqnarray} \label{eq:O-operator1.3}
   \{x_1,x_2,x_3\}=[Tx_1,Tx_2,x_3], \quad\forall~x_1,x_2,x_3 \in A.
\end{eqnarray}
\end{remark}

\begin{definition} {\rm(\cite{Bai-2})}
Let $(A,[\cdot,\cdot,\cdot])$ be  a 3-Lie algebra, and $r=\sum_{i}x_i\otimes y_i~\in A\otimes A$. Then $r$ is called a solution of the \textbf{3-Lie classical Yang-Baxter equation (3-Lie CYBE)} in  $(A,[\cdot,\cdot,\cdot])$ if
\begin{equation}\label{1}
  [[r,r,r]]=0,
\end{equation}
where\begin{eqnarray}
     % \nonumber to remove numbering (before each equation)
        [[r,r,r]]&=&\sum_{i,j,k}([x_i,x_j,x_k]\otimes y_i \otimes y_j \otimes y_k+x_i \otimes[y_i,x_j,x_k]\otimes y_j\otimes y_k \\
\nonumber&+&x_i\otimes x_j\otimes[y_i,y_j,x_k]\otimes y_k+x_i\otimes x_j \otimes x_k\otimes[y_i,y_j,y_k]).
     \end{eqnarray}

\end{definition}

\section{ $\mathcal{O}$-operators on 3-dimensional complex 3-Lie algebra}

In this section, we will provide the sufficient and necessary condition for the $\mathcal{O}$-operator on a 3-Lie algebra based on its structure constants.
Furthermore, we give the classification theorem of $\mathcal{O}$-operator on 3-dimensional complex 3-Lie algebra.

Suppose $\{e_1,e_2, \ldots,e_n\}$ is a basis of 3-Lie algebra $(A,[ \cdot ,\cdot,\cdot] )$. Choose three elements $e_i$, $e_j$, and $e_k$  from the basis, such that $ 1\leq i<j < k \leq n $ .  Then by \eqref{eq:O-operator1.11}, we have
\begin{equation}
   [Te_i,Te_j,Te_k]=T\left([Te_i,Te_j,e_k]+[Te_k,Te_i,e_j]+[Te_j,Te_k,e_i]\right). \label{eq:O-operator1.1}
\end{equation}

\begin{lemma}\label{lem:O}
Let $(A,[ \cdot ,\cdot,\cdot] )$ be a 3-Lie algebra and let $\{e_1,e_2, \ldots,e_n\}$ be a basis of $A$. For all positive integers $1 \leq i,j,k \leq n$ and structural constants $C_{ijk}^t \in \mathbb{C} $, set
\begin{equation}
[e_i,e_j,e_k]=\sum_{t=1}^nC_{ijk}^t e_t.
 \end{equation}
If $T:A\rightarrow A$ is a linear map defined by
\begin{eqnarray}
  T(e_i)=\sum_{m=1}^na_{im}e_m,\quad a_{im}\in \mathbb{C},
\end{eqnarray}
then $T$ is an  $\mathcal{O}$-operator if and only if the following equations hold:
\begin{equation}\label{eq:stru-cons2}
\sum_{m,s,v,t=1}^n\left(a_{js}\left(a_{it}a_{kv}C_{tsv}^m-a_{iv}a_{tm}C_{vsk}^t-a_{kv}a_{tm}C_{svi}^t\right)-a_{kv}a_{is}a_{tm}C_{vsj}^t\right)=0.
\end{equation}
\end{lemma}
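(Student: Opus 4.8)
The plan is to turn the defining identity of an $\mathcal{O}$-operator into a finite system of scalar equations by expanding everything in the basis $\{e_1,\dots,e_n\}$ and comparing coefficients. First I would recall that $T$ is an $\mathcal{O}$-operator with respect to the adjoint representation precisely when \eqref{eq:O-operator1.11} holds for all $x_1,x_2,x_3\in A$. Because both sides of \eqref{eq:O-operator1.11} are linear in each argument and $T$ is linear, it suffices to test the identity on basis triples, i.e. to verify \eqref{eq:O-operator1.1} for all indices. Moreover, both sides of \eqref{eq:O-operator1.1} are totally skew-symmetric in $(i,j,k)$: the left side because $[\cdot,\cdot,\cdot]$ is skew-symmetric, and the right side because the combination $[Te_i,Te_j,e_k]+[Te_k,Te_i,e_j]+[Te_j,Te_k,e_i]$ is invariant under cyclic permutation of $(i,j,k)$ and changes sign under a transposition. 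Hence \eqref{eq:O-operator1.1} holds automatically when two indices coincide, and the whole system for $1\le i,j,k\le n$ is equivalent to the subsystem with $i<j<k$.

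Next I would substitute $Te_i=\sum_t a_{it}e_t$, $Te_j=\sum_s a_{js}e_s$, $Te_k=\sum_v a_{kv}e_v$ and expand the left side using $[e_t,e_s,e_v]=\sum_m C_{tsv}^m e_m$, obtaining
\[
[Te_i,Te_j,Te_k]=\sum_{m=1}^n\Big(\sum_{t,s,v=1}^n a_{it}a_{js}a_{kv}C_{tsv}^m\Big)e_m .
\]
For the right side I would expand each of the three inner brackets in the same way; for instance $[Te_i,Te_j,e_k]=\sum_{w}\big(\sum_{t,s}a_{it}a_{js}C_{tsk}^w\big)e_w$, and then apply $T$ once more via $Te_w=\sum_m a_{wm}e_m$, so that each of the three cyclic terms becomes an expression of the shape $\sum_m\big(\sum a\,a\,C\,a\big)e_m$. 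Since $\{e_m\}$ is a basis, \eqref{eq:O-operator1.1} is equivalent to equating the coefficient of each $e_m$ on the two sides.

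Finally, moving the three right-hand contributions across and relabelling the dummy summation indices — in particular renaming the contracted upper index of $C$ and the internal lower indices to the letters $s,v,t$ so that the first, second and third cyclic terms land on the correct summands — transforms the coefficient equation into exactly \eqref{eq:stru-cons2}. I do not expect a genuine conceptual obstacle: the argument is a direct, if index-heavy, computation. The one place demanding care is this last bookkeeping step, namely matching my dummy indices to the precise arrangement in \eqref{eq:stru-cons2}, correctly pairing each inner bracket $[Te_i,Te_j,e_k]$, $[Te_k,Te_i,e_j]$, $[Te_j,Te_k,e_i]$ with its summand, and keeping in mind that the basis index $m$ is the free coefficient index read off for each fixed $(i,j,k)$.
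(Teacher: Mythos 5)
Your proposal is correct and matches the paper's own proof essentially step for step: substitute $Te_i=\sum_t a_{it}e_t$ etc.\ into \eqref{eq:O-operator1.1}, expand both sides via the structure constants, relabel dummy indices, and compare coefficients of the basis vectors $e_m$. Your preliminary observation that multilinearity plus total skew-symmetry in $(i,j,k)$ reduces the verification to triples with $i<j<k$ is a correct (and slightly more careful) justification of the reduction the paper performs implicitly before the lemma, and your reading of $m$ as a free coefficient index is the right interpretation of \eqref{eq:stru-cons2}.
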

\begin{proof}Let
$$T(e_i)=\sum_{m=1}^na_{im}e_m,~T(e_j)=\sum_{s=1}^na_{js}e_s,~ T(e_k)=\sum_{v=1}^na_{kv}e_v, \quad a_{im},a_{js},a_{kv} \in \mathbb{C}.$$
 Since $T$ is an $\mathcal{O}$-operator, the left-part of \eqref{eq:O-operator1.1} can be reduced to
\begin{eqnarray*}
 &&\left[Te_i,Te_j,Te_k\right] =\left[\sum_{m=1}^na_{im}e_m,\sum_{s=1}^na_{js}e_s, \sum_{v=1}^na_{kv}e_v\right]\\
&=&\sum_{m,s,v=1}^na_{im}a_{js}a_{kv}\left[e_m,e_s,e_v \right]
=\sum_{m,s,v=1}^na_{im}a_{js}a_{kv}\sum_tC_{msv}^t e_t\\
 &=&\sum_{m,s,v,t=1}^na_{im}a_{js}a_{kv}C_{msv}^te_t=\sum_{m,s,v,t=1}^na_{it}a_{js}a_{kv}C_{tsv}^me_m.
\end{eqnarray*}
The right-part of \eqref{eq:O-operator1.1} can be reduced to
\begin{eqnarray*}
&&  T\left([Te_i,Te_j,e_k]+[Te_k,Te_i,e_j]+[Te_j,Te_k,e_i]\right) \\
\nonumber&=& T\left(\left[\sum_{m=1}^na_{im}e_m,\sum_{s=1}^na_{js}e_s,e_k\right]
+\left[\sum_{v=1}^na_{kv}e_v,\sum_{m=1}^na_{im}e_m,e_j\right]
+\left[\sum_{s=1}^na_{js}e_s,\sum_{v=1}^na_{kv}e_v,e_i]\right]\right)\\
\nonumber&=&T\left(\sum_{m,s=1}^na_{im}a_{js}[e_m,e_s,e_k]+\sum_{v,m=1}^na_{kv}a_{im}[e_v,e_m,e_j]+\sum_{s,v=1}^na_{js}a_{kv}[e_s,e_v,e_i] \right)\\
\nonumber&=&\sum_{m,s=1}^na_{im}a_{js}T\left([e_m,e_s,e_k]\right)+\sum_{v,m=1}^na_{kv}a_{im}T\left([e_v,e_m,e_j]\right)+\sum_{s,v=1}^na_{js}a_{kv}T\left([e_s,e_v,e_i]\right)\\
\nonumber&=&\sum_{m,s=1}^na_{im}a_{js}T\left(\sum_t^nC_{msk}^t e_t\right)+\sum_{v,m=1}^na_{kv}a_{im}T\left(\sum_t^nC_{vmj}^t e_t\right)+\sum_{s,v=1}^na_{js}a_{kv}T\left(\sum_t^nC_{svi}^t e_t\right)\\
\nonumber&=&\sum_{m,s,t=1}^na_{im}a_{js}C_{msk}^t T\left(e_t\right)+\sum_{v,m,t=1}^na_{kv}a_{im}C_{vmj}^tT\left( e_t\right)+\sum_{s,v,t=1}^na_{sj}a_{vk}C_{svi}^tT\left(e_t\right)\\
\nonumber&=&\sum_{m,s,v,t=1}^na_{im}a_{js}C_{msk}^ta_{tv}e_v+\sum_{m,s,v,t=1}^na_{kv}a_{im}C_{vmj}^ta_{ts}e_s+\sum_{m,s,v,t=1}^na_{js}a_{kv}C_{svi}^ta_{tm}e_m\\
\nonumber&=&\sum_{m,s,v,t=1}^n\left(a_{im}a_{js}a_{tv}C_{msk}^te_v+a_{kv}a_{im}a_{ts}C_{vmj}^te_s+a_{js}a_{kv}a_{tm}C_{svi}^te_m\right)\\
\nonumber&=&\sum_{m,s,v,t=1}^n\left(a_{iv}a_{js}a_{tm}C_{vsk}^te_m+a_{kv}a_{is}a_{tm}C_{vsj}^te_m+a_{js}a_{kv}a_{tm}C_{svi}^te_m\right)\\
\nonumber&=&\sum_{m,s,v,t=1}^n\left(a_{iv}a_{js}a_{tm}C_{vsk}^t+a_{kv}a_{is}a_{tm}C_{vsj}^t+a_{js}a_{kv}a_{tm}C_{svi}^t\right)e_m.
\end{eqnarray*}
The identity \eqref{eq:O-operator1.1} holds if and only if
\begin{eqnarray}
 \sum_{m,s,v,t=1}^n\left(a_{it}a_{js}a_{kv}C_{tsv}^m-a_{iv}a_{js}a_{tm}C_{vsk}^t-a_{kv}a_{is}a_{tm}C_{vsj}^t-a_{js}a_{kv}a_{tm}C_{svi}^t\right)e_m=0.
\end{eqnarray}
Thus we have
\begin{eqnarray}
 \sum_{m,s,v,t=1}^n\left(a_{js}\left(a_{it}a_{kv}C_{tsv}^m-a_{iv}a_{tm}C_{vsk}^t-a_{kv}a_{tm}C_{svi}^t\right)-a_{kv}a_{is}a_{tm}C_{vsj}^t\right)=0.
\end{eqnarray}
Then we can draw the conclusion.
\end{proof}

Let $(A,[ \cdot ,\cdot,\cdot] )$ be the 3-dimensional 3-Lie algebra given by Proposition \ref{prop:3d-3Lie}.
A linear operator $T:A\rightarrow A$ is determined by
\begin{eqnarray}\label{eq:T-opre}
              % \nonumber to remove numbering (before each equation)
                \left(\begin{array}{c}T(e_1)\\T(e_2)\\T(e_3)
 \end{array}\right) = \left(\begin{array}{ccc}a_{11}&a_{12}&a_{13}\\
 a_{21}&a_{22}&a_{23}\\a_{31}&a_{32}&a_{33}
 \end{array}\right)
\left(\begin{array}{c}e_1\\e_2\\e_3 \end{array}\right),
              \end{eqnarray}
where $a_{ij}\in \mathbb{C}, 1\leq i,j\leq3$. If $T$ is an $\mathcal{O}$-operator,  then the matrix $(a_{ij})_{3\times3}$  satisfies
\eqref{eq:stru-cons2} for $ e_i, e_j, e_k \in \{e_1, e_2, e_3\}$.
From now on,  we use $O$ to denote  the matrix $(a_{ij})_{3\times3}$.
\begin{theorem}\label{O-operqtor}
Let $T:A\rightarrow A$ be an $\mathcal{O}$-operator on the
3-dimensional 3-Lie algebra defined by \eqref{eq:T-opre}.
Then the matrices of non-zero $\mathcal{O}$-operators on $A$ are given by:\\
\vspace{0.5cm}
$O_1=\left(\begin{array}{ccc} 0&0&0\\
a_{21}&a_{22}&a_{23}\\a_{31}&a_{32}&a_{33}
\end{array}\right)$;\quad
$O_2=\left(\begin{array}{ccc} a_{11}&0&0\\
 a_{21}&-a_{33}&a_{23}\\a_{31}&a_{32}&a_{33}
\end{array}\right) $;\quad
$O_3=\left(\begin{array}{ccc}0&0&a_{13}\\
a_{21}&0&a_{23}\\0&0&a_{33}
\end{array}\right)$;\\
\vspace{0.5cm}
$O_4=\left(\begin{array}{ccc}0&0&a_{13}\\
0&0&a_{23}\\-\dfrac{a_{23}}{a_{13}}&1&a_{33}
\end{array}\right)$;\quad
$O_5=\left(\begin{array}{ccc}0&0&a_{13}\\
0&a_{22}&a_{23}\\0&a_{32}&\dfrac{a_{23}a_{32}}{a_{22}}
\end{array}\right)$;\quad
$O_6=\left(\begin{array}{ccc}0&0&a_{13}\\
a_{21}&a_{22}&a_{23}\\0&0&0
\end{array}\right)$;\\
\vspace{0.5cm}
$O_7= \left(\begin{array}{ccc}0&0&a_{13}\\
a_{21}&1&a_{23}\\1&\dfrac{1}{a_{21}}&\dfrac{a_{23}}{a_{21}}+a_{13}
\end{array}\right)$;\quad
$O_8=\left(\begin{array}{ccc}1&0&1\\
a_{21}&a_{22}&a_{23}\\-1&0&-1
\end{array}\right)$;\quad
$O_9=\left(\begin{array}{ccc}1&0&1\\
0&a_{22}&0\\-1&a_{32}&-1
\end{array}\right)$;\\
\vspace{0.5cm}
$O_{10}=\left(\begin{array}{ccc}0&a_{12}&0\\
0&a_{22}&0\\a_{31}&a_{32}&0
\end{array}\right)$;\quad
$O_{11}=\left(\begin{array}{ccc}0&a_{12}&0\\
0&0&0\\a_{31}&a_{32}&a_{33}
\end{array}\right)$;
$O_{12}=\left(\begin{array}{ccc}0&a_{12}&0\\
0&a_{22}&1\\0&a_{22}a_{33}&a_{33}
\end{array}\right)$;\\
\vspace{0.5cm}
$O_{13}=\left(\begin{array}{ccc}0&a_{12}&0\\
1&a_{22}&1\\a_{31}&a_{22}a_{31}-a_{12}&a_{31}
\end{array}\right)$;\quad
$O_{14}=\left(\begin{array}{ccc}0&1&1\\
a_{21}&a_{22}&a_{22}\\-a_{21}&a_{32}&a_{32}
\end{array}\right)$;\\
\vspace{0.5cm}
$O_{15}=\left(\begin{array}{ccc}0&1&1\\
a_{32}(a_{22}-a_{23})&a_{22}&a_{23}\\0&a_{32}&a_{32}
\end{array}\right)$;
$O_{16}=\left(\begin{array}{ccc}0&1&1\\
0&a_{22}&a_{22}\\a_{22}(a_{33}-a_{32})&a_{32}&a_{33}
\end{array}\right)$;\\
\vspace{0.5cm}
$O_{17}=\left(\begin{array}{ccc}0&a_{12}&a_{13}\\
0&0&a_{23}\\0&0&a_{33}
\end{array}\right)$;
$O_{18}=\left(\begin{array}{ccc}0&1&1\\
0&a_{22}&a_{23}\\0&a_{32}&\dfrac{a_{23}a_{32}}{a_{22}}
\end{array}\right)$;
$O_{19}=\left(\begin{array}{ccc}0&1&1\\
a_{21}&1&2\\a_{21}&1&2
\end{array}\right)$;\\
\vspace{0.5cm}
$O_{20}=\left(\begin{array}{ccc}a_{11}&a_{12}&0\\
0&0&0\\a_{31}&a_{32}&0
\end{array}\right)$;
$O_{21}=\left(\begin{array}{ccc}a_{11}&a_{12}&0\\
0&-a_{33}&a_{23}\\0&-\dfrac{a_{33}^2}{a_{23}}&a_{33}
\end{array}\right)$;
$O_{22}=\left(\begin{array}{ccc}a_{33}&-a_{33}^2&0\\
1&-a_{33}&0\\a_{31}&a_{32}&a_{33}
\end{array}\right)$;\\
\vspace{0.5cm}
$O_{23}=\left(\begin{array}{ccc}a_{11}&1&0\\
1&-1&1\\ 1-a_{11}&-2&1
\end{array}\right)$;
$O_{24}=\left(\begin{array}{ccc}1&1&0\\
-1&-1&0\\a_{31}&a_{32}&-1
\end{array}\right)$;
$O_{25}=\left(\begin{array}{ccc}1&1&0\\
1&3+a_{23}a_{31}&a_{23}\\a_{31}&-\dfrac{2}{a_{23}}&-1
\end{array}\right)$;
$O_{26}=\left(\begin{array}{ccc}a_{11}&1&1\\
0&-a_{33}&a_{23}\\0&-\dfrac{a_{33}^2}{a_{23}}&a_{33}
\end{array}\right)$;
$O_{27}=\left(\begin{array}{ccc}1&1&1\\
0&1&0\\-1&a_{32}&-1
\end{array}\right)$;
$O_{28}=\left(\begin{array}{ccc}-1&1&1\\
0&a_{22}&1-a_{33}\\-1
&1-a_{22}&a_{33}
\end{array}\right)$,$a_{33}\neq1$;\\
\vspace{0.5cm}
$O_{29}=\left(\begin{array}{ccc}0&1&1\\
a_{33}(a_{23}-a_{33})&-a_{33}&a_{23}\\
0&-a_{33}&a_{33}
\end{array}\right)$;\quad
$O_{30}=\left(\begin{array}{ccc}1&1&1\\
1&0&\sqrt{3}-1\\
1&-1-\sqrt{3}&0
\end{array}\right)$;\\
\vspace{0.5cm}
$O_{31}=\left(\begin{array}{ccc}1&1&1\\
1&0&-1-\sqrt{3}\\
1&\sqrt{3}-1&0
\end{array}\right)$.
\end{theorem}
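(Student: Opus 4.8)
The plan is to reduce the defining identity to a tiny explicit polynomial system and then carve out its solution set by cases.

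\emph{Reduction.} Since $\dim A=3$, the condition \eqref{eq:O-operator1.1} only has to be checked for the single triple $(i,j,k)=(1,2,3)$. For the algebra of Proposition~\ref{prop:3d-3Lie} the only nonzero brackets are $[e_{\sigma(1)},e_{\sigma(2)},e_{\sigma(3)}]=\operatorname{sgn}(\sigma)\,e_1$, so for arbitrary $x,y,z$ one has $[x,y,z]=\det(x,y,z)\,e_1$, the determinant of their coordinate vectors. Reading the rows of $O$ as the coordinate vectors of $Te_1,Te_2,Te_3$, the left-hand side of \eqref{eq:O-operator1.1} becomes $\det(O)\,e_1$, while the three brackets on the right-hand side are the three principal $2\times2$ minors of $O$ times $e_1$. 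Writing $S=a_{11}a_{22}+a_{11}a_{33}+a_{22}a_{33}-a_{12}a_{21}-a_{13}a_{31}-a_{23}a_{32}$ for their sum, the right-hand side equals $S\,Te_1=S(a_{11}e_1+a_{12}e_2+a_{13}e_3)$. Comparing coefficients, \eqref{eq:O-operator1.1} is equivalent to
\[
\det(O)=S\,a_{11},\qquad S\,a_{12}=0,\qquad S\,a_{13}=0 .
\]
The same system results from Lemma~\ref{lem:O} by inserting $C^t_{ijk}=\operatorname{sgn}(ijk)\,\delta_{t1}$, but the determinant form keeps the bookkeeping transparent.

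\emph{The easy branch.} The last two equations give the dichotomy $S=0$ or $a_{12}=a_{13}=0$. When $a_{12}=a_{13}=0$, the first equation collapses to $a_{11}^{2}(a_{22}+a_{33})=0$; thus either $a_{11}=0$, which is the unconstrained family $O_1$ of matrices with vanishing first row, or $a_{22}+a_{33}=0$, which is $O_2$. Both satisfy all three equations identically (for $O_2$ one finds $\det O=a_{11}S$), so no genericity hypothesis is needed here.

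\emph{The main branch.} The remaining solutions have $S=0$, whence the first equation forces $\det(O)=0$; the families $O_3,\dots,O_{31}$ arise from this locus. I would enumerate them by a nested case analysis: choose a pivot entry, split on its vanishing, and in each branch use one of the two relations $S=0$, $\det O=0$ to solve for a single entry and substitute into the other. For instance one first separates $a_{13}\neq0$ from $a_{13}=0,\ a_{12}\neq0$, and within each splits further on whether $a_{11}$, $a_{22}$, or a suitable $2\times2$ minor vanishes; each leaf yields a rational parametrization, and the inequality hypotheses recorded in the statement (such as $a_{33}\neq1$ in $O_{28}$, or the denominators $a_{13},a_{22},a_{23},a_{21}$ occurring in $O_4,O_5,O_7,O_{21},\dots$) are precisely the conditions legitimizing the divisions performed. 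The converse inclusion—that each listed matrix satisfies the system—is a direct substitution into $\det(O)=Sa_{11},\ Sa_{12}=0,\ Sa_{13}=0$; for example $O_3,O_7,O_{12},O_{28},O_{30}$ all give $S=0$ and $\det O=0$.

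The main obstacle is not any individual computation but the completeness and nonredundancy of the $\{S=0,\ \det O=0\}$ enumeration: one must ensure the case tree is exhaustive, that each rational solve is valid on its stratum, and that the many boundary strata where two branches meet are assigned consistently. This accounting is exactly what dictates the explicit constraints attached to the families $O_3,\dots,O_{31}$.
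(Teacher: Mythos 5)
Your reduction is correct and is, modulo packaging, the same as the paper's. Writing $[x,y,z]=\det(x,y,z)\,e_1$ and summing the three principal $2\times2$ minors into $S$ gives exactly the system that Section~6 derives from Lemma~\ref{lem:O}: one checks that $\det O-Sa_{11}$ is precisely the left-hand side of \eqref{eq:3d-O-equ1}, while $Sa_{12}=0$ and $Sa_{13}=0$ are \eqref{eq:3d-O-equ2} and \eqref{eq:3d-O-equ3}. Your determinant bookkeeping is tidier than the structure-constant expansion, and your easy branch ($a_{12}=a_{13}=0$, whence $a_{11}^2(a_{22}+a_{33})=0$) reproduces $O_1$ and $O_2$ exactly as in the paper's subcases $(a_1)$ and $(a_2)$. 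The description of the remaining solution locus as $\{S=0,\ \det O=0\}$ is also correct, since $S=0$ forces $\det O=Sa_{11}=0$.

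The genuine gap is that the enumeration of $O_3,\dots,O_{31}$ --- which is the actual content of Theorem~\ref{O-operqtor} --- is left as a plan rather than carried out. Saying ``I would enumerate them by a nested case analysis: choose a pivot entry, split on its vanishing'' does not verify that the twenty-nine listed families all satisfy the system, nor that they jointly exhaust the locus $\{S=0,\ \det O=0\}$; the paper's Section~6 spends its entire length on precisely this tree, splitting successively on $a_{12},a_{13},a_{11},a_{22},a_{21},a_{31},a_{23},a_{32}$. Moreover, your assertion that the recorded inequality hypotheses are ``precisely the conditions legitimizing the divisions performed'' does not match how the list is actually produced: many leaves arise from explicit normalizations, e.g.\ taking $a_{32}=1$ for $O_4$, $a_{31}=a_{22}=1$ for $O_7$, $a_{11}=a_{13}=1$ and $a_{31}=a_{33}=-1$ for $O_8$ and $O_9$, or $a_{23}=2$, $a_{22}=1$, $a_{33}=2$, $a_{32}=1$ for $O_{19}$. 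Since the defining identity \eqref{eq:O-operator1.11} is homogeneous of degree $3$ in $T$, the solution set is a cone and one nonzero entry may be scaled to $1$; but scaling cannot justify fixing several entries at once, so these leaves are specializations within strata rather than rational solves on them. A completed version of your argument would have to either reproduce these specializations explicitly or prove an exhaustiveness statement stronger than what the case tree alone provides; as written, the proposal establishes the reduction and the families $O_1$, $O_2$, but not the classification.
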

The proof of this theorem will be presented in Section 6.

\sectionnew{3-Pre-Lie algebras induced by $\mathcal{O}$-operators }\lb{intro}
In this section,  we will introduce the  structural constant of  $3$-Pre-Lie algebra. Then we  give the sufficient and necessary condition for $3$-Pre-Lie algebra  by its definition.
 In one direction,  by  the  equations of the structure constant, we  prove that the two-dimensional 3-Pre-Lie algebras is trivial.
In the other  direction,
 we  get thirty-one examples of 3-dimensional 3-Pre-Lie algebra by $\mathcal{O}$-operators given in Theorem \ref{O-operqtor}.

Let $\{e_1,e_2, \ldots,e_n\}$ be a basis of $(A,\{ \cdot ,\cdot,\cdot\} )$. For all $1 \leq i,j,k \leq n$, $ i,j,k \in N^+$, set
\begin{equation}\label{eq:stru-cons}
  \{e_i,e_j,e_k\}=\sum_t^nC_{ijk}^t e_t,
\end{equation}
where   $C_{ijk}^t\in \mathbb{C}$. And $C_{ijk}^t$ are called structure constants of  $3$-Pre-Lie algebra which satisfy
\begin{equation}
C_{ijk}^t=-C_{jik}^t.
\end{equation}
\begin{proposition}\label{4.1}
If $\{ \cdot ,\cdot,\cdot\} $  defined by \eqref{eq:stru-cons}, then
$(A,\{ \cdot ,\cdot,\cdot\})$ is an $n$-dimensional  3-Pre-Lie algebra if and only if the structure constants satisfy the following conditions :
\begin{equation}\label{eq:multi-111}
\sum_{t,l}^n\big(C_{ijk}^tC_{sut}^l-(C_{sui}^t+C_{uis}^t+C_{isu}^t)C_{tjk}^l
-(C_{suj}^t+C_{ujs}^t+C_{jsu}^t) C_{itk}^l-C_{suk}^tC_{ijt}^l\big)
=0,
\end{equation}
\begin{equation}\label{eq:multi-222}
\sum_{t,l}^n\big((C_{sui}^t+C_{uis}^t+C_{isu}^t)C_{tjk}^l -C_{ijk}^tC_{sut}^l-C_{sjk}^tC_{uit}^l-C_{ujk}^tC_{ist}^l\big)=0,
\end{equation}
where$~1 \leq i,j,k,s,u,t,l\leq n$.
\end{proposition}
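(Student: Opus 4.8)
The plan is to translate each of the three defining identities of a 3-Pre-Lie algebra directly into the basis $\{e_1,\dots,e_n\}$ and then compare coefficients of each $e_l$. Since $\{\cdot,\cdot,\cdot\}$ is multilinear, each of \eqref{eq:skew-sym}, \eqref{eq:multi-11}, \eqref{eq:multi-12} holds for arbitrary arguments if and only if it holds on every tuple of basis vectors; so throughout I would substitute $x_1,x_2,x_3,x_4,x_5 = e_s,e_u,e_i,e_j,e_k$ and expand using \eqref{eq:stru-cons}. The skew-symmetry \eqref{eq:skew-sym} is exactly the relation $C_{ijk}^t = -C_{jik}^t$ already recorded, so only \eqref{eq:multi-11} and \eqref{eq:multi-12} remain to be unwound.

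First I would compute the structure constants of the sub-adjacent bracket. By \eqref{eq:ad-3Lie},
\[
[e_s,e_u,e_i]^c = \{e_s,e_u,e_i\} + \{e_u,e_i,e_s\} + \{e_i,e_s,e_u\} = \sum_t \left(C_{sui}^t + C_{uis}^t + C_{isu}^t\right) e_t,
\]
and similarly $[e_s,e_u,e_j]^c = \sum_t (C_{suj}^t + C_{ujs}^t + C_{jsu}^t) e_t$. These cyclic three-term sums are precisely the coefficients occurring in \eqref{eq:multi-111} and \eqref{eq:multi-222}.

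Next, substituting into \eqref{eq:multi-11}, the left side becomes $\{e_s,e_u,\{e_i,e_j,e_k\}\} = \sum_{t,l} C_{ijk}^t C_{sut}^l e_l$, while the three terms on the right expand, using the computation above, to $\sum_{t,l}(C_{sui}^t + C_{uis}^t + C_{isu}^t) C_{tjk}^l e_l$, $\sum_{t,l}(C_{suj}^t + C_{ujs}^t + C_{jsu}^t) C_{itk}^l e_l$, and $\sum_{t,l} C_{suk}^t C_{ijt}^l e_l$ respectively; equating the coefficient of each $e_l$ yields exactly \eqref{eq:multi-111}. In the same way, \eqref{eq:multi-12} gives on the left $\sum_{t,l}(C_{sui}^t + C_{uis}^t + C_{isu}^t) C_{tjk}^l e_l$ and on the right the sum of $\sum_{t,l} C_{ijk}^t C_{sut}^l e_l$, $\sum_{t,l} C_{sjk}^t C_{uit}^l e_l$, and $\sum_{t,l} C_{ujk}^t C_{ist}^l e_l$, so comparing coefficients of $e_l$ gives \eqref{eq:multi-222}. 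Conversely, if \eqref{eq:multi-111} and \eqref{eq:multi-222} hold for all admissible indices, then both identities hold on all basis tuples, hence by multilinearity for all elements of $A$, and together with skew-symmetry this shows $(A,\{\cdot,\cdot,\cdot\})$ is a 3-Pre-Lie algebra.

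The computation is routine once the dictionary is fixed; the only real difficulty is bookkeeping. The main thing to get right is the consistent assignment of the five arguments $x_1,\dots,x_5$ to $e_s,e_u,e_i,e_j,e_k$ across both identities, together with keeping the inner contraction index $t$ (produced by evaluating the inner product or the sub-adjacent bracket) distinct from the outer index $l$ labelling the output basis vector. A secondary subtlety is that the cyclic three-term coefficients must be attached to the correct slot of the outer product—$C_{sui}^t+C_{uis}^t+C_{isu}^t$ multiplies $C_{tjk}^l$ whereas $C_{suj}^t+C_{ujs}^t+C_{jsu}^t$ multiplies $C_{itk}^l$—which is dictated by whether the sub-adjacent bracket sits in the first or second argument of the outer product.
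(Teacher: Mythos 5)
Your proposal is correct and takes essentially the same route as the paper's own proof: both substitute the basis tuple $(e_s,e_u,e_i,e_j,e_k)$ into \eqref{eq:multi-11} and \eqref{eq:multi-12}, expand the sub-adjacent bracket via \eqref{eq:ad-3Lie} into the cyclic sums $C_{sui}^t+C_{uis}^t+C_{isu}^t$, and compare coefficients of $e_l$ to obtain \eqref{eq:multi-111} and \eqref{eq:multi-222}. Your explicit appeal to multilinearity to reduce the general identities to basis tuples (and to recover them in the converse direction) is in fact slightly more careful than the paper's terse ``holds if and only if'' phrasing, but it is the same argument.
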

\begin{proof}
Select five elements $e_s$, $e_u$, $e_i$, $e_j$, $e_k$ from the basis  of a 3-Pre-Lie algebra, where $1 \leq s,u,i,j,k\leq n$.  According to \eqref{eq:multi-11} and \eqref{eq:multi-12} in the definition of 3-Pre-Lie algebra,
we can obtain the following  identities:
 \begin{eqnarray}
   \{e_s,e_u,\{e_i,e_j,e_k\}\} &=& \{[e_s,e_u,e_i]^c,e_j,e_k\}+\{e_i,[e_s,e_u,e_j]^c,e_k\} \label{eq:3-Pre-Lie 1}\\
   \nonumber&+& \{e_i,e_j,\{e_s,e_u,e_k\}\},
\end{eqnarray}
\begin{eqnarray}
   \{[e_s,e_u,e_i]^c,e_j,e_k\} &=&\{e_s,e_u,\{e_i,e_j,e_k\}\} +\{e_u,e_i,\{e_s,e_j,e_k\}\} \label{eq:3-Pre-Lie 2}\\
\mathfrak{}   \nonumber&+& \{e_i,e_s,\{e_u,e_j,e_k\}\}.
\end{eqnarray}
Calculate the left side of   \eqref{eq:3-Pre-Lie 1}, we have    $$\{e_s,e_u,\{e_i,e_j,e_k\}\}=\{e_s,e_u,\sum_t^nC_{ijk}^te_t\}=\sum_t^nC_{ijk}^t\{e_s,e_u,e_t\} =\sum_{t,l}^nC_{sut}^lC_{ijk}^te_l.$$
Calculate the right side of   \eqref{eq:3-Pre-Lie 1}, we have
\begin{eqnarray*}
  && \{[e_s,e_u,e_i]^c,e_j,e_k\}+\{e_i,[e_s,e_u,e_j]^c,e_k\}+\{e_i,e_j,\{e_s,e_u,e_k\}\}\\
  &=&\{\{e_s,e_u,e_i\},e_j,e_k\}+\{\{e_u,e_i,e_s\},e_j,e_k\}+\{\{e_i,e_s,e_u\},e_j,e_k\}\\
  &+&\{e_i,\{e_s,e_u,,e_j\},e_k\}+\{e_i,\{e_u,e_j,e_s\},e_k\}+\{e_i,\{e_j,e_s,e_u\},e_k\}\\
  &+&\{e_i,e_j,\{e_s,e_u,e_k\}\} \\
  &=&\{\sum_t^nC_{sui}^te_t,e_j,e_k\}+\{\sum_t^nC_{uis}^te_t,e_j,e_k\}+\{\sum_t^nC_{isu}^te_t,e_j,e_k\} \\
  &+&\{e_i,\sum_t^nC_{suj}^te_t,e_k\}+\{e_i,\sum_t^nC_{ujs}^te_t,e_k\}+\{e_i,\sum_t^nC_{jsu}^te_t,e_k\} \\
  &+&\{e_i,e_j,\sum_t^nC_{suk}^te_t\} \\
  &=&\sum_t^nC_{sui}^t\sum_l^nC_{tjk}^le_l+\sum_t^nC_{uis}^t\sum_l^nC_{tjk}^le_l+\sum_t^nC_{isu}^t\sum_l^nC_{tjk}^le_l \\
  &+&\sum_t^nC_{suj}^t\sum_l^nC_{itk}^le_l+\sum_t^nC_{ujs}^t\sum_l^nC_{itk}^le_l+\sum_t^nC_{jsu}^t\sum_l^nC_{itk}^le_l\\
  &+&\sum_t^nC_{suk}^t\sum_l^nC_{ijt}^le_l  \\
  &=&\sum_t^n(C_{sui}^t+C_{uis}^t+C_{isu}^t)\sum_l^nC_{tjk}^le_l +\sum_t^n(C_{suj}^t+C_{ujs}^t+C_{jsu}^t)\sum_l^nC_{itk}^le_l\\
  &+&\sum_t^nC_{suk}^t\sum_l^nC_{ijt}^le_l.
\end{eqnarray*}
The identity \eqref{eq:3-Pre-Lie 1} holds if and only if
\begin{eqnarray}\label{eq:3-Pre-Lie 111111}
  &&\sum_{t,l}^nC_{ijk}^tC_{sut}^le_l = \sum_{t,l}^n\big((C_{sui}^t+C_{uis}^t+C_{isu}^t)C_{tjk}^l \nonumber\\
 &&\quad+(C_{suj}^t+C_{ujs}^t+C_{jsu}^t)C_{itk}^l+C_{suk}^tC_{ijt}^l\big)e_l.
\end{eqnarray}
 Comparing the coefficients in \eqref{eq:3-Pre-Lie 111111}, we obtain \eqref{eq:multi-111}.

Calculate the left side of   \eqref{eq:3-Pre-Lie 2}, we have
 \begin{eqnarray*}
  &&\{\sum_t^nC_{sui}^te_t,e_j,e_k\}+\{\sum_t^nC_{uis}^te_t,e_j,e_k\}+\{\sum_t^nC_{isu}^te_t,e_j,e_k\} \\
 &=&\sum_t^nC_{sui}^t\{e_t,e_j,e_k\}+\sum_t^nC_{uis}^t\{e_t,e_j,e_k\}+\sum_t^nC_{isu}^t\{e_t,e_j,e_k\}\\
 &=&\sum_t^nC_{sui}^t\sum_l^nC_{tjk}^le_l+\sum_t^nC_{uis}^t\sum_l^nC_{tjk}^le_l +\sum_t^nC_{isu}^t\sum_l^nC_{tjk}^le_l\\
 &=&\sum_{t,l}^nC_{tjk}^l\big(C_{sui}^t+C_{uis}^t +C_{isu}^t \big)e_l.
 \end{eqnarray*}
Calculate the right side of   \eqref{eq:3-Pre-Lie 2}, we obtain
 \begin{eqnarray*}
  &&\{e_s,e_u,\{e_i,e_j,e_k\}\} +\{e_u,e_i,\{e_s,e_j,e_k\}\}+ \{e_i,e_s,\{e_u,e_j,e_k\}\}\\
  \nonumber&=&\{e_s,e_u,\sum_t^nC_{ijk}^te_t\} +\{e_u,e_i,\sum_t^nC_{sjk}^te_t\}+ \{e_i,e_s,\sum_t^nC_{ujk}^te_t\}\\
  \nonumber&=&\sum_t^nC_{ijk}^t\{e_s,e_u,e_t\} +\sum_t^nC_{sjk}^t\{e_u,e_i,e_t\}+ \sum_t^nC_{ujk}^t\{e_i,e_s,e_t\}\\
  \nonumber&=&\sum_t^nC_{ijk}^t\sum_l^nC_{sut}^le_l+\sum_t^nC_{sjk}^t\sum_l^nC_{uit}^le_l+\sum_t^nC_{ujk}^t\sum_l^nC_{ist}^le_l\\
  \nonumber&=&\sum_{t,l}^n(C_{ijk}^tC_{sut}^l+C_{sjk}^tC_{uit}^l+C_{ujk}^tC_{ist}^l)e_l.
\end{eqnarray*}
The identity \eqref{eq:3-Pre-Lie 2} holds if and only if
\begin{eqnarray}
  && \sum_{t,l}^nC_{tjk}^l\big(C_{sui}^t+C_{uis}^t +C_{isu}^t \big)e_l  \nonumber\\
 &=& \sum_{t,l}^n(C_{ijk}^tC_{sut}^l+C_{sjk}^tC_{uit}^l+C_{ujk}^tC_{ist}^l)e_l.\label{eq:3-Pre-Lie 2222}
\end{eqnarray}
 Comparing the coefficients in \eqref{eq:3-Pre-Lie 2222}, we obtain \eqref{eq:multi-222}. Therefore, the conclusion holds.
\end{proof}

\begin{theorem}\label{1111}
Let  $(A, \{ \cdot ,\cdot,\cdot\} )$ be a two-dimensional 3-Pre-Lie algebra with a basis $\{e_1,e_2\}$. Then the two-dimensional 3-Pre-Lie algebra is trivial.
\end{theorem}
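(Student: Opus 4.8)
The plan is to invoke Proposition \ref{4.1} with $n=2$ and turn the two families of structure-constant identities \eqref{eq:multi-111} and \eqref{eq:multi-222} into explicit polynomial relations among the constants $C_{ijk}^t$, $i,j,k,t\in\{1,2\}$. The skew-symmetry \eqref{eq:skew-sym}, namely $C_{ijk}^t=-C_{jik}^t$, immediately forces $C_{11k}^t=C_{22k}^t=0$ and $C_{21k}^t=-C_{12k}^t$, so the entire product is encoded by the four scalars $a=C_{121}^1$, $b=C_{121}^2$, $c=C_{122}^1$, $d=C_{122}^2$, equivalently by the single left-multiplication operator $\{e_1,e_2,\cdot\}$ whose matrix in $\{e_1,e_2\}$ is $\left(\begin{smallmatrix} a & c \\ b & d\end{smallmatrix}\right)$; every other left multiplication is then $0$ or the negative of this one. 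Proving the theorem reduces to showing that the identities of Proposition \ref{4.1} force $a=b=c=d=0$.

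The decisive structural input I would isolate first is that in dimension two the cyclic combination $[e_i,e_j,e_k]^c$ of \eqref{eq:ad-3Lie} is totally antisymmetric: cyclic invariance is built into its definition, and antisymmetry under the transposition of the first two entries follows from \eqref{eq:skew-sym}; a totally antisymmetric trilinear map on a two-dimensional space vanishes. Hence the sub-adjacent $3$-Lie algebra is trivial, and in particular the combinations $C_{sui}^t+C_{uis}^t+C_{isu}^t$ and $C_{suj}^t+C_{ujs}^t+C_{jsu}^t$ occurring in \eqref{eq:multi-111} and \eqref{eq:multi-222} all drop out. With this simplification in hand I would substitute the four surviving constants into \eqref{eq:multi-111} and \eqref{eq:multi-222} for each admissible tuple $(s,u,i,j,k)$ and each free index $l$, collect the resulting quadratic relations in $a,b,c,d$, and solve. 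In operator language this is the statement that the $\{e_i,e_j,\cdot\}$ commute (the reduction of \eqref{eq:multi-11}) together with the contraction coming from \eqref{eq:multi-12}.

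The step I expect to be the main obstacle is precisely a consequence of that simplification: because the antisymmetrized terms collapse in dimension two, a large proportion of the equations produced by \eqref{eq:multi-111} and \eqref{eq:multi-222} become tautological, so the whole difficulty is to \emph{isolate the genuinely restrictive relations} among $a,b,c,d$ rather than vacuous ones. This demands scrupulous bookkeeping of index order and of the signs generated by \eqref{eq:skew-sym} in each contraction $\sum_{t}C^{t}_{\bullet}C^{l}_{\bullet}$, where transposing a single pair of indices flips a sign and can spuriously create or cancel a term. I would therefore organise the argument as a careful finite case analysis over the tuples $(s,u,i,j,k,l)$, tracking exactly which contractions survive, and it is at this point that the claimed triviality must be secured; confirming that the extracted system indeed admits only $a=b=c=d=0$ is the crux on which the entire statement rests.
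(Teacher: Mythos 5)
Your proposal founders at its declared crux: the system you plan to extract from \eqref{eq:multi-111} and \eqref{eq:multi-222} does not force $a=b=c=d=0$ --- it is empty. Your own two preliminary observations already show this. Write $\{x,y,\cdot\}=\omega(x,y)L$, where $\omega$ is the skew-symmetric bilinear form with $\omega(e_1,e_2)=1$ and $L=\left(\begin{smallmatrix} a&c\\ b&d \end{smallmatrix}\right)$ is your left-multiplication matrix. Since $[\cdot,\cdot,\cdot]^c\equiv 0$ in dimension two (as you correctly proved), identity \eqref{eq:multi-11} reduces to $\omega(x_1,x_2)\omega(x_3,x_4)L^2x_5=\omega(x_3,x_4)\omega(x_1,x_2)L^2x_5$, a tautology, and identity \eqref{eq:multi-12} reduces to $\bigl(\omega(x_1,x_2)\omega(x_3,x_4)+\omega(x_2,x_3)\omega(x_1,x_4)+\omega(x_3,x_1)\omega(x_2,x_4)\bigr)L^2x_5=0$, whose scalar coefficient is, up to a constant factor, $(\omega\wedge\omega)(x_1,x_2,x_3,x_4)$ and hence vanishes identically because $\Lambda^4$ of a two-dimensional space is zero. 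So \emph{every} quadruple $(a,b,c,d)$ satisfies the axioms; for instance $\{e_1,e_2,e_1\}=e_1$, extended by \eqref{eq:skew-sym} with all other products zero, is a nonzero two-dimensional 3-Pre-Lie algebra. No amount of bookkeeping over tuples $(s,u,i,j,k,l)$ can isolate ``genuinely restrictive relations,'' because there are none, and the step you yourself flag as the crux --- confirming that only $a=b=c=d=0$ survives --- cannot be completed.

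The underlying gap is one of interpretation. You read ``trivial'' as ``the multiplication vanishes,'' but the paper's own proof shows something else: by a five-case check over how many of the basis elements $e_s,e_u,e_i,e_j,e_k$ equal $e_1$ or $e_2$, using only \eqref{eq:skew-sym} and the total skew-symmetry of $[\cdot,\cdot,\cdot]^c$, it verifies that the defining identities \eqref{eq:3-Pre-Lie 1} and \eqref{eq:3-Pre-Lie 2} hold \emph{automatically} in dimension two. That is, the 3-Pre-Lie \emph{conditions} are trivial (vacuous, imposing no constraints on the structure constants), which is consistent with the sub-adjacent 3-Lie algebra being trivial --- exactly what you established en route. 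Your proposed case analysis is structurally the same computation as the paper's, but aimed at the opposite conclusion; had you carried the bookkeeping through, you would have found every relation collapsing to $0=0$ and been forced to the paper's reading rather than to $a=b=c=d=0$, which the example above refutes.
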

\begin{proof}
Choose $e_s, e_u, e_i, e_j$ and $e_k$ in \eqref{eq:3-Pre-Lie 1} and \eqref{eq:3-Pre-Lie 2} as $e_1$ or $e_2$. We have the following five cases.

 {\bf Case 1:} When all $5$ elements are $e_1$ or $e_2$, then  \eqref{eq:3-Pre-Lie 1} and \eqref{eq:3-Pre-Lie 2}   hold.

  {\bf Case 2:} When four of the five elements are $e_1$ and one is $e_2$, then  \eqref{eq:3-Pre-Lie 1} and \eqref{eq:3-Pre-Lie 2}  have five cases, respectively.
Set the first element be $e_2$ others be $e_1$,  then \eqref{eq:3-Pre-Lie 1} and \eqref{eq:3-Pre-Lie 2} can be simplified to
       \begin{eqnarray}
           % \nonumber to remove numbering (before each equation)
            \{e_2,e_1,\{e_1,e_1,e_1\}\} &=& \{[e_2,e_1,e_1]^c,e_1,e_1\}+\{e_1,[e_2,e_1,e_1]^c,e_1\} \label{2a1}\\
            \nonumber&+& \{e_1,e_1,\{e_2,e_1,e_1\}\},\\
            \{[e_2,e_1,e_1]^c,e_1,e_1\}&=&\{e_2,e_1,\{e_1,e_1,e_1\}\}+\{e_1,e_1,\{e_2,e_1,e_1\}\}\label{2a2}\\
            \nonumber&+&\{e_1,e_2,\{e_1,e_1,e_1\}\}.
            \end{eqnarray}
By \eqref{eq:skew-sym} in the definition of 3-Pre-Lie algebra and the skew-symmetry of $[\cdot,\cdot,\cdot]^c$, $\eqref{2a1}$ and $\eqref{2a2}$  holds.
The other four cases can be proved in the same way.

 {\bf Case 3:}:When three of the five elements are $e_1$ and two are $e_2$, then  \eqref{eq:3-Pre-Lie 1} and \eqref{eq:3-Pre-Lie 2}  have ten cases, respectively.
Set the first and the second element be $e_2$, others be $e_1$.  \eqref{eq:3-Pre-Lie 1} and \eqref{eq:3-Pre-Lie 2} can be simplified to
     \begin{eqnarray}
           % \nonumber to remove numbering (before each equation)
           \{e_2,e_2,\{e_1,e_1,e_1\}\} &=& \{[e_2,e_2,e_1]^c,e_1,e_1\}+\{e_1,[e_2,e_2,e_1]^c,e_1\} \label{3a1}\\
           \nonumber&+&\{e_1,e_1,\{e_2,e_2,e_1\}\},\\
           \{[e_2,e_2,e_1]^c,e_1,e_1\}&=&\{e_2,e_2,\{e_1,e_1,e_1\}\}+\{e_2,e_1,\{e_2,e_1,e_1\}\}\label{3a2}\\
           \nonumber &+&\{e_1,e_2,\{e_2,e_1,e_1\}\}.
           \end{eqnarray}
By \eqref{eq:skew-sym} in the definition of 3-Pre-Lie algebra and skew-symmetry of $[\cdot,\cdot,\cdot]^c$, $\eqref{3a1}$ and $\eqref{3a2}$  holds. The other nine cases can be proved in the same way.

{\bf Case 4:}:When three of the five elements are $e_2$ and two are $e_1$, there are ten cases. By  Case 3, \eqref{eq:3-Pre-Lie 1} and \eqref{eq:3-Pre-Lie 2}  hold.

  {\bf Case 5:}:When four of the five elements are $e_2$ and one is $e_1$, there are five cases. By  Case 2, \eqref{eq:3-Pre-Lie 1} and \eqref{eq:3-Pre-Lie 2}  hold.

Therefore, the conclusion holds.
\end{proof}

\begin{remark}Using the structure constants in Proposition \ref{4.1} to classify  three-dimensional 3-Pre-Lie algebra, we attain a cubic equation set which is composed of 56 equations with 27 parameters. It is complicated by direct computation. But we can get some examples of three-dimensional 3-Pre-Lie algebra by $\mathcal{O}$-operators on 3-Lie algebra.
\end{remark}

By Proposition \ref{O-3preLie} and Remark \ref{rmk:O-and-3-pre}, 3-Pre-Lie algebra could be induced by $\mathcal{O}$-operators.  Thus, we can obtain thirty-one examples of three-dimensional 3-Pre-Lie algebras from $O_i$  in  Theorem \ref{O-operqtor}.

\begin{theorem}\label{3-pre-Lie example}
Let $B_{O_i}$
denote the 3-Pre-Lie algebra induced
by $O_i$ in  Theorem \ref{O-operqtor}.
Then $O_i$ induce thirty-one 3-Pre-Lie algebras, and the non-zero multiplication tables of these 3-Pre-Lie algebras are given as follows:
\\
\vspace{0.5cm}
($B_{O_1})\left\{\begin{aligned}
&\{e_2,e_3,e_1\}=(a_{22}a_{33}-a_{23}a_{32})e_1,\\
&\{e_2,e_3,e_2\}=(a_{23}a_{31}-a_{21}a_{33})e_1,\\
&\{e_1,e_3,e_3\}=(a_{21}a_{32}-a_{22}a_{31})e_1;
\end{aligned} \right.$ \quad
($B_{O_2})\left\{\begin{aligned}&\{e_1,e_2,e_2\}=-a_{11}a_{23}e_1,\\
&\{e_1,e_2,e_3\}=-a_{11}a_{33}e_1,\\
&\{e_1,e_3,e_2\}=-a_{11}a_{33}e_1,\\
&\{e_1,e_3,e_3\}=a_{11}a_{32}e_1,\\
&\{e_2,e_3,e_1\}=-(a_{23}a_{32}+a_{33}^2)e_1,\\
&\{e_2,e_3,e_2\}=(a_{23}a_{31}-a_{21}a_{33})e_1,\\
&\{e_2,e_3,e_3\}=(a_{21}a_{32}+a_{33}a_{31})e_1
;\end{aligned} \right.$\\
\vspace{0.5cm}
($B_{O_3})\left\{\begin{aligned}&\{e_1,e_2,e_2\}=a_{13}a_{21}e_1,\\
&\{e_2,e_3,e_2\}=-a_{21}a_{33}e_1;\end{aligned} \right.$ \quad\quad\quad\quad\quad
($B_{O_4})\left\{\begin{aligned}
&\{e_1,e_3,e_1\}=-a_{13}e_1,\\
&\{e_1,e_3,e_1\}=-a_{23}e_1,\\
&\{e_2,e_3,e_1\}=-a_{23}e_1,\\
&\{e_2,e_3,e_2\}=-\dfrac{a_{23}^2}{a_{13}}e_1
;
\end{aligned} \right.$ \\
\vspace{0.5cm}
($B_{O_5})\left\{\begin{aligned}&\{e_1,e_2,e_1\}=-a_{13}a_{22}e_1,\\
&\{e_1,e_3,e_1\}=-a_{13}a_{32}e_1
;\end{aligned} \right.$\quad\quad\quad\quad\, \,\,\,
($B_{O_6})\left\{\begin{aligned}&\{e_1,e_2,e_1\}=-a_{13}a_{22}e_1,\\
&\{e_1,e_2,e_2\}=a_{13}a_{21}e_1
;\end{aligned} \right.$ \\
\vspace{0.5cm}
($B_{O_7})\left\{\begin{aligned}&\{e_1,e_2,e_1\}=-a_{13}e_1,\\
&\{e_1,e_2,e_2\}=a_{13}a_{21}e_1,\\
&\{e_1,e_3,e_1\}=-\dfrac{a_{13}}{a_{21}}e_1,\\
&\{e_1,e_3,e_2\}=a_{13}e_1,\\
&\{e_2,e_3,e_1\}=a_{13}e_1,\\
&\{e_2,e_3,e_2\}=-a_{13}a_{21}e_1
;\end{aligned} \right.$ \quad
($B_{O_8})\left\{\begin{aligned}&\{e_1,e_2,e_1\}=-a_{22}e_1,\\
&\{e_1,e_2,e_2\}=(a_{21}-a_{33})e_1,\\
&\{e_1,e_2,e_3\}=a_{22}e_1,\\
&\{e_2,e_3,e_1\}=-a_{22}e_1,\\
&\{e_2,e_3,e_2\}=(a_{21}-a_{33})e_1,\\
&\{e_2,e_3,e_3\}=a_{22}e_1;
\end{aligned} \right.$ \\
\vspace{0.5cm}
($B_{O_9})\left\{\begin{aligned}&\{e_1,e_2,e_1\}=-a_{22}e_1,\\
&\{e_1,e_2,e_3\}=a_{22}e_1,\\
&\{e_1,e_3,e_1\}=-a_{32}e_1,\\
&\{e_1,e_3,e_3\}=a_{32}e_1,\\
&\{e_2,e_3,e_1\}=-a_{22}e_1,\\
&\{e_2,e_3,e_3\}=a_{22}e_1;
\end{aligned} \right.$ \quad\quad
($B_{O_{10}})\left\{\begin{aligned}
&\{e_1,e_3,e_3\}=-a_{12}a_{31}e_1,\\
&\{e_2,e_3,e_3\}=-a_{22}a_{31}e_1;
\end{aligned} \right.$ \\
\vspace{0.5cm}
($B_{O_{11}})\left\{\begin{aligned}
&\{e_1,e_3,e_1\}=a_{12}a_{33}e_1,\\
&\{e_1,e_3,e_3\}=a_{12}a_{31}e_1;
\end{aligned} \right.$ \quad\quad\quad\quad\quad\,
($B_{O_{12}})\left\{\begin{aligned}
&\{e_1,e_2,e_1\}=a_{12}e_1,\\
&\{e_1,e_3,e_1\}=a_{12}a_{33}e_1;
\end{aligned} \right.$ \\
\vspace{0.5cm}
($B_{O_{13}})\left\{\begin{aligned}
&\{e_1,e_2,e_1\}=a_{12}e_1,\\
&\{e_1,e_2,e_3\}=-a_{12}e_1,\\
&\{e_1,e_3,e_1\}=a_{12}a_{31}e_1,\\
&\{e_1,e_3,e_3\}=-a_{12}a_{31}e_1,\\
&\{e_2,e_3,e_1\}=a_{12}e_1,\\
&\{e_2,e_3,e_3\}=-a_{12}e_1;
\end{aligned} \right.$\quad\quad\quad\quad\quad
($B_{O_{14}})\left\{\begin{aligned}
&\{e_1,e_2,e_2\}=a_{21}e_1,\\
&\{e_1,e_2,e_3\}=-a_{21}e_1,\\
&\{e_1,e_3,e_2\}=-a_{21}e_1,\\
&\{e_1,e_3,e_3\}=-a_{21}e_1,\\
&\{e_2,e_3,e_2\}=-(a_{22}a_{21}+a_{32}a_{21})e_1,\\
&\{e_2,e_3,e_3\}=(a_{22}a_{21}+a_{32}a_{21})e_1;
\end{aligned} \right.$ \\
\vspace{0.5cm}
($B_{O_{15}})\left\{\begin{aligned}
&\{e_1,e_2,e_1\}=(a_{23}-a_{22})e_1,\\
&\{e_1,e_2,e_2\}=a_{32}(a_{22}-a_{23})e_1,\\
&\{e_1,e_2,e_3\}=a_{32}(a_{23}-a_{22})e_1,\\
&\{e_2,e_3,e_1\}=(a_{22}a_{32}-a_{23}a_{32})e_1,\\
&\{e_2,e_3,e_2\}=a_{32}^2(a_{23}-a_{22})e_1,\\
&\{e_2,e_3,e_3\}=a_{32}^2(a_{22}-a_{23})e_1;
\end{aligned} \right.$ \,\,\,
($B_{O_{16}})\left\{\begin{aligned}
&\{e_1,e_3,e_1\}=(a_{33}-a_{32})e_1,\\
&\{e_1,e_3,e_2\}=a_{22}(a_{33}-a_{32})e_1,\\
&\{e_1,e_3,e_3\}=-a_{22}(a_{33}-a_{32})e_1,\\
&\{e_2,e_3,e_1\}=(a_{22}a_{33}-a_{22}a_{32})e_1,\\
&\{e_2,e_3,e_2\}=a_{22}^2(a_{33}-a_{32})e_1,\\
&\{e_2,e_3,e_3\}=-a_{22}^2(a_{33}-a_{32})e_1;
\end{aligned} \right.$ \\
\vspace{0.5cm}
($B_{O_{17}})\left\{\begin{aligned}
&\{e_1,e_2,e_1\}=a_{12}a_{23}e_1,\\
&\{e_1,e_3,e_1\}=a_{12}a_{33}e_1;
\end{aligned} \right.$ \quad \quad\quad\quad\quad\,
($B_{O_{18}})\left\{\begin{aligned}
&\{e_1,e_2,e_1\}=(a_{23}-a_{22})e_1,\\
&\{e_1,e_3,e_1\}=\dfrac{a_{23}a_{32}-a_{23}a_{32}}{a_{22}}e_1;
\end{aligned} \right.$ \\
\vspace{0.5cm}
($B_{O_{19}})\left\{\begin{aligned}
&\{e_1,e_2,e_1\}=e_1,\\
&\{e_1,e_2,e_2\}=a_{21}e_1,\\
&\{e_1,e_2,e_3\}=-a_{21}e_1,\\
&\{e_1,e_3,e_1\}=e_1,\\
&\{e_1,e_3,e_2\}=a_{21}e_1,\\
&\{e_1,e_3,e_3\}=-a_{21}e_1;
\end{aligned} \right.$ \quad\quad\quad\quad\quad\quad
($B_{O_{20}})\left\{\begin{aligned}
&\{e_1,e_3,e_3\}=(a_{11}a_{32}-a_{12}a_{31})e_1;
\end{aligned} \right.$ \\
\vspace{0.5cm}
($B_{O_{21}})\left\{\begin{aligned}
&\{e_1,e_2,e_1\}=a_{12}a_{23}e_1,\\
&\{e_1,e_2,e_2\}=-a_{11}a_{23}e_1,\\
&\{e_1,e_2,e_3\}=-a_{11}a_{33}e_1,\\
&\{e_1,e_3,e_1\}=a_{12}a_{33}e_1,\\
&\{e_1,e_3,e_2\}=-a_{11}a_{33}e_1,\\
&\{e_1,e_3,e_3\}=-\dfrac{a_{11}a_{33}^2}{a_{23}}e_1;
\end{aligned} \right.$ \quad\quad\quad\quad
($B_{O_{22}})\left\{\begin{aligned}
&\{e_1,e_3,e_1\}=-a_{33}^2e_1,\\
&\{e_1,e_3,e_2\}=-a_{33}^2e_1,\\
&\{e_1,e_3,e_3\}=a_{33}(a_{32}+a_{33}a_{31})e_1,\\
&\{e_2,e_3,e_1\}=-a_{33}^2e_1,\\
&\{e_2,e_3,e_2\}=-a_{33}e_1,\\
&\{e_2,e_3,e_3\}=(a_{32}+a_{33}a_{31})e_1;
\end{aligned} \right.$ \\
\vspace{0.5cm}
($B_{O_{23}})\left\{\begin{aligned}
&\{e_1,e_2,e_1\}=e_1,\\
&\{e_1,e_2,e_2\}=-a_{11}e_1,\\
&\{e_1,e_2,e_3\}=-(a_{11}+1)e_1,\\
&\{e_1,e_3,e_1\}=e_1,\\
&\{e_1,e_3,e_2\}=-a_{11}e_1,\\
&\{e_1,e_3,e_3\}=-(a_{11}+1)e_1,\\
&\{e_2,e_3,e_1\}=e_1,\\
&\{e_2,e_3,e_2\}=-a_{11}e_1,\\
&\{e_2,e_3,e_3\}=-(a_{11}+1)e_1;
\end{aligned} \right.$ \quad\quad\quad
($B_{O_{24}})\left\{\begin{aligned}
&\{e_1,e_3,e_1\}=-e_1,\\
&\{e_1,e_3,e_2\}=e_1,\\
&\{e_1,e_3,e_3\}=(a_{32}-a_{31})e_1,\\
&\{e_2,e_3,e_1\}=e_1,\\
&\{e_2,e_3,e_2\}=-e_1,\\
&\{e_2,e_3,e_3\}=(a_{31}-a_{32})e_1;
\end{aligned} \right.$ \\
\vspace{0.5cm}
($B_{O_{25}})\left\{\begin{aligned}
&\{e_1,e_2,e_1\}=a_{23}e_1,\\
&\{e_1,e_2,e_2\}=-a_{23}e_1,\\
&\{e_1,e_2,e_3\}=(2+a_{23}a_{31})e_1,\\
&\{e_1,e_3,e_1\}=-e_1,\\
&\{e_1,e_3,e_2\}=e_1,\\
&\{e_1,e_3,e_3\}=-(\dfrac{2}{a_{23}}+a_{31})e_1,\\
&\{e_2,e_3,e_1\}=-(1+a_{23}a_{31})e_1,\\
&\{e_2,e_3,e_2\}=(a_{23}a_{31}+1)e_1,\\
&\{e_2,e_3,e_3\}=-(\dfrac{2}{a_{23}}+3a_{31}+a_{23}a_{31}^2)e_1;
\end{aligned} \right.$
($B_{O_{26}})\left\{\begin{aligned}
&\{e_1,e_2,e_1\}=(a_{23}+a_{33})e_1,\\
&\{e_1,e_2,e_2\}=-a_{11}a_{23}e_1,\\
&\{e_1,e_2,e_3\}=-a_{11}a_{33}e_1,\\
&\{e_1,e_3,e_1\}=\dfrac{a_{33}(a_{23}+a_{33})}{a_{23}}e_1,\\
&\{e_1,e_3,e_2\}=-a_{11}a_{33}e_1,\\
&\{e_1,e_3,e_3\}=-\dfrac{a_{11}a_{33}^2}{a_{23}}e_1;
\end{aligned} \right.$ \\
\vspace{0.5cm}
($B_{O_{27}})\left\{\begin{aligned}
&\{e_1,e_2,e_1\}=-e_1,\\
&\{e_1,e_2,e_3\}=e_1,\\
&\{e_1,e_3,e_1\}=-(1+a_{32})e_1,\\
&\{e_1,e_3,e_3\}=(1+a_{32})e_1,\\
&\{e_2,e_3,e_1\}=-e_1,\\
&\{e_2,e_3,e_3\}=e_1;
\end{aligned} \right.$ \quad\quad\quad\quad\quad
($B_{O_{28}})\left\{\begin{aligned}
&\{e_1,e_2,e_1\}=-(a_{33}+a_{22}-1)e_1,\\
&\{e_1,e_2,e_2\}=-(a_{33}-1)e_1,\\
&\{e_1,e_2,e_3\}=-a_{22}e_1,\\
&\{e_1,e_3,e_1\}=(a_{33}+a_{22}-1)e_1,\\
&\{e_1,e_3,e_2\}=(a_{33}-1)e_1,\\
&\{e_1,e_3,e_3\}=a_{22}e_1,\\
&\{e_2,e_3,e_1\}=(a_{33}+a_{22}-1)e_1,\\
&\{e_2,e_3,e_2\}=(a_{33}-1)e_1,\\
&\{e_2,e_3,e_3\}=a_{22}e_1;
\end{aligned} \right.$ \\
\vspace{0.5cm}
($B_{O_{29}})\left\{\begin{aligned}
&\{e_1,e_2,e_1\}=(a_{23}+a_{33})e_1,\\
&\{e_1,e_2,e_2\}=a_{33}(a_{23}-a_{33})e_1,\\
&\{e_1,e_2,e_3\}=a_{33}(a_{33}-a_{23})e_1,\\
&\{e_2,e_3,e_1\}=a_{33}(a_{23}-a_{33})e_1,\\
&\{e_2,e_3,e_2\}=a_{33}^2(a_{33}-a_{23})e_1,\\
&\{e_2,e_3,e_3\}=a_{33}^2(a_{33}-a_{23})e_1;
\end{aligned} \right.$\quad\quad\quad\quad
($B_{O_{30}})\left\{\begin{aligned}
&\{e_1,e_2,e_1\}=(\sqrt{3}-1)e_1,\\
&\{e_1,e_2,e_2\}=(2-\sqrt{3})e_1,\\
&\{e_1,e_2,e_3\}=-e_1,\\
&\{e_1,e_3,e_1\}=(\sqrt{3}+1)e_1,\\
&\{e_1,e_3,e_2\}=e_1,\\
&\{e_1,e_3,e_3\}=-(\sqrt{3}+2)e_1,\\
&\{e_2,e_3,e_1\}=-2e_1,\\
&\{e_2,e_3,e_2\}=(\sqrt{3}-1)e_1,\\
&\{e_2,e_3,e_3\}=-(\sqrt{3}+1)e_1;
\end{aligned} \right.$ \\
\vspace{0.5cm}
($B_{O_{31}})\left\{\begin{aligned}
&\{e_1,e_2,e_1\}=-(\sqrt{3}+1)e_1,\\
&\{e_1,e_2,e_2\}=(2+\sqrt{3})e_1,\\
&\{e_1,e_2,e_3\}=-e_1,\\
&\{e_1,e_3,e_1\}=(1-\sqrt{3})e_1,\\
&\{e_1,e_3,e_2\}=e_1,\\
&\{e_1,e_3,e_3\}=(\sqrt{3}-2)e_1,\\
&\{e_2,e_3,e_1\}=2e_1,\\
&\{e_2,e_3,e_2\}=-(\sqrt{3}+1)e_1,\\
&\{e_2,e_3,e_3\}=(\sqrt{3}-1)e_1.
\end{aligned} \right.$
\end{theorem}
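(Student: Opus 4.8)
The plan is to notice that most of the work is already done for us: by Proposition~\ref{O-3preLie} and Remark~\ref{rmk:O-and-3-pre}, every $\mathcal{O}$-operator $T$ with matrix $O_i$ relative to the adjoint representation of $A$ \emph{automatically} equips $A$ with a $3$-Pre-Lie structure through $\{x_1,x_2,x_3\}=[Tx_1,Tx_2,x_3]$. Consequently there is nothing to verify about the $3$-Pre-Lie axioms \eqref{eq:skew-sym}--\eqref{eq:multi-12}; the entire content of the theorem is the explicit evaluation of this product for each of the thirty-one matrices. So I would present the proof as one uniform computation carried out $31$ times, preceded by a single general lemma-style formula.

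First I would set up that formula. Writing $Te_i=\sum_{p}a_{ip}e_p$ and using trilinearity, $\{e_i,e_j,e_k\}=\sum_{p,q}a_{ip}a_{jq}[e_p,e_q,e_k]$. By Proposition~\ref{prop:3d-3Lie} the only nonzero bracket of $A$ is $[e_1,e_2,e_3]=e_1$ together with its skew-symmetric images, so $[e_p,e_q,e_k]$ vanishes unless $\{p,q,k\}=\{1,2,3\}$, in which case it equals $e_1$ times the sign of the permutation sending $(1,2,3)$ to $(p,q,k)$. Collecting terms, every product lands in $\mathbb{C}e_1$ and is a signed $2\times2$ minor of $O_i$ formed from rows $i,j$ and the two columns complementary to $k$:
\[
\{e_i,e_j,e_1\}=(a_{i2}a_{j3}-a_{i3}a_{j2})e_1,\qquad
\{e_i,e_j,e_2\}=(a_{i3}a_{j1}-a_{i1}a_{j3})e_1,\qquad
\{e_i,e_j,e_3\}=(a_{i1}a_{j2}-a_{i2}a_{j1})e_1.
\]

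With this formula in hand the remaining step is bookkeeping. For each $O_i$ I would run over the pairs $(i,j)$ with $i<j$, invoking the skew-symmetry $\{e_i,e_j,e_k\}=-\{e_j,e_i,e_k\}$ to avoid redundancy, and over $k\in\{1,2,3\}$, substitute the entries of $O_i$ into the three minor expressions, and record the nonzero results; these are exactly the tables $(B_{O_i})$. Two observations cut the labor sharply: whenever a row of $O_i$ is zero the corresponding products all vanish (this disposes of large blocks in $O_1,O_6,O_{10},O_{11},O_{17}$, and others), and whenever two columns of $O_i$ coincide the minor built from them is identically zero, which is what forces so many entries to disappear for the matrices whose columns are proportional.

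The only genuine obstacle is the sheer volume and the attendant risk of sign or index slips across thirty-one cases; there is no conceptual difficulty, since the minor formula mechanizes every evaluation. I would therefore organize the verification as a table of the $2\times2$ minors of each $O_i$ and use two built-in consistency checks: the skew-symmetry $C^t_{ijk}=-C^t_{jik}$ of Proposition~\ref{4.1} must be visible in the signs of paired entries, and every nonzero product must be a multiple of $e_1$, reflecting that the image of the bracket of $A$ is the line $\mathbb{C}e_1$. These checks make it straightforward to catch arithmetic errors while transcribing the thirty-one multiplication tables.
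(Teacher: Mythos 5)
Your proposal is correct and follows essentially the same route as the paper: both rely on Proposition~\ref{O-3preLie} and Remark~\ref{rmk:O-and-3-pre} to dispense with verifying the $3$-Pre-Lie axioms, and then evaluate $\{e_i,e_j,e_k\}=[Te_i,Te_j,e_k]$ using the single nonzero bracket $[e_1,e_2,e_3]=e_1$ of Proposition~\ref{prop:3d-3Lie} --- the paper does this by direct term-by-term expansion for $B_{O_1}$ and asserts the remaining thirty cases are analogous, while you merely package the identical computation into the closed-form signed $2\times 2$ minor formulas, which is a harmless (indeed helpful) systematization rather than a different argument.
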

\begin{proof}
The first $\mathcal{O}$-operator   on three-dimensional 3-Lie algebra $(A,[\cdot,\cdot,\cdot])$ in Theorem \ref{O-operqtor} is
\begin{eqnarray*}
% \nonumber to remove numbering (before each equation)
 Te_1 = 0,~~
  Te_2 =a_{21}e_1+a_{22}e_2+a_{23}e_3,~~
   Te_3=a_{31}e_1+a_{32}e_2+a_{33}e_3.
\end{eqnarray*}
By Proposition \ref{prop:3d-3Lie} and Remark \ref{rmk:O-and-3-pre}, we have
 \begin{eqnarray*}
% \nonumber to remove numbering (before each equation)
 \nonumber&&\{e_2,e_3,e_1\}=[Te_2,Te_3,e_1]=[a_{21}e_1+a_{22}e_2+a_{23}e_3,a_{31}e_1+a_{32}e_2+a_{33}e_3,e_1]\\
\nonumber&&=[a_{21}e_1,a_{31}e_1,e_1]+[a_{21}e_1,a_{32}e_2,e_1]+[a_{21}e_1,a_{33}e_3,e_1]\\
\nonumber&&\quad+[a_{22}e_2,a_{31}e_1,e_1]+[a_{22}e_2,a_{32}e_2,e_1]+[a_{22}e_2,a_{33}e_3,e_1]\\
\nonumber&&\quad+[a_{23}e_3,a_{31}e_1,e_1]+[a_{23}e_3,a_{32}e_2,e_1]+[a_{23}e_3,a_{33}e_3,e_1]\\
\nonumber&&=a_{21}a_{31}[e_1,e_1,e_1]+a_{21}a_{32}[e_1,e_2,e_1]+a_{21}a_{33}[e_1,e_3,e_1]\\
\nonumber&&\quad+a_{22}a_{31}[e_2,e_1,e_1]+a_{22}a_{32}[e_2,e_2,e_1]+a_{22}a_{33}[e_2,e_3,e_1]\\
\nonumber&&\quad+a_{23}a_{31}[e_3,e_1,e_1]+a_{23}a_{32}[e_3,e_2,e_1]+a_{23}a_{33}[e_3,e_3,e_1]\\
\nonumber&&=a_{22}a_{33}[e_2,e_3,e_1]+a_{23}a_{32}[e_3,e_2,e_1]=(a_{22}a_{33}-a_{23}a_{32})e_1.\label{B-O1}
\end{eqnarray*}
Calculating other multiplications in $B_{O_1}$,  we obtain
$$\{e_2,e_3,e_2\}=(a_{23}a_{31}-a_{21}a_{33})e_1,\quad  \{e_1,e_3,e_3\}=(a_{21}a_{32}-a_{22}a_{31})e_1.$$
Therefore, we obtain  $B_{O_1}$.

Other thirty examples of 3-Pre-Lie algebras can be calculated by the same method. This finishes the proof.
\end{proof}

\section{Solutions of 3-Lie  CYBE induced by  $\mathcal{O}$-operators on 3-dimensional complex 3-Lie algebra}\lb{intro}
%We study semi-direct product  3-Lie algebra
In this section, we introduced the
 semi-direct product  3-Lie algebra  $A\ltimes_{\ad^*} A^*$ on complex 3-Lie algebra.  Then we obtain the   skew-symmetric solution  of 3-Lie  CYBE in  $A\ltimes_{\ad^*} A^*$ by the $\mathcal{O}$-operators given in   Theorem \ref{O-operqtor}.
\begin{lemma}\label{semi-direct product} {\rm{\cite{Kasymov}}}
Let $(A,[\cdot,\cdot,\cdot])$ be a 3-Lie algebra, V be a vector space and $ \rho:\otimes^2A\rightarrow \gl(V)$ be a skew-symmetric linear map. Then $(V,\rho)$ is a representation of $A$ if and only if there is a 3-Lie algebra structure  on the direct sum  $A\oplus V$ of vector spaces, defined by
\begin{eqnarray}
% \nonumber to remove numbering (before each equation)
  [x_1+v_1,x_2+v_2,x_3+v_3]_{A\oplus V} &=&[x_1,x_2,x_3]+\rho(x_1,x_2)v_3+\rho(x_3,x_1)v_2\\
  \nonumber&+&\rho(x_2,x_3)v_1,
\end{eqnarray}
for $x_i\in A, v_i \in V, 1\leq i \leq3 $. We denote this semi-direct product 3-Lie algebra by $A\ltimes_\rho V.$
\end{lemma}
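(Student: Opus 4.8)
The plan is to take the trilinear map on $A\oplus V$ defined by the displayed formula as given, and to verify directly that it is skew-symmetric and that it satisfies the Fundamental Identity precisely when $(V,\rho)$ is a representation. Skew-symmetry is the warm-up: interchanging two of the three arguments $x_i+v_i$ sends the $A$-part $[x_1,x_2,x_3]$ to its negative (since $A$ is already a $3$-Lie algebra), and permutes the three summands $\rho(x_1,x_2)v_3$, $\rho(x_3,x_1)v_2$, $\rho(x_2,x_3)v_1$ among themselves up to sign, the net effect being an overall change of sign, using only that $\rho$ is skew-symmetric. Hence it suffices to check one adjacent transposition and invoke that transpositions generate $S_3$.

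The core of the argument is the Fundamental Identity. First I would introduce a $\mathbb{Z}$-grading by assigning degree $0$ to elements of $A$ and degree $1$ to elements of $V$, and record the structural fact that $[\cdot,\cdot,\cdot]_{A\oplus V}$ additively preserves degree whenever the total degree of its three inputs is $\le 1$, and vanishes as soon as two or more of its inputs are homogeneous of degree $1$ (because then every $\rho$-term has a zero $A$-slot and the $A$-part has two zero arguments). Since both sides of the Fundamental Identity are multilinear in $X_1,\dots,X_5$, it is enough to test it on homogeneous inputs, so the verification splits according to the total degree $d$ of $(X_1,\dots,X_5)$.

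Then I would dispatch the cases. For $d=0$ all inputs lie in $A$ and the identity reduces to the Fundamental Identity of $A$, both $V$-parts vanishing. For $d\ge 2$ both sides vanish identically: each term is an iterated bracket, and by the grading fact either the inner bracket already collects two degree-$1$ elements in a single slot and dies, or the inner bracket has degree $\le 1$ and the outer bracket then carries total degree $d\ge 2$ and dies; so every summand is zero. The only substantive case is $d=1$, where exactly one input lies in $V$. Using skew-symmetry I reduce to two positions for this input: inside the acting pair (say $X_2=v$) or inside the acted triple (say $X_5=v$). Expanding both sides and reading off the coefficient of $v$, the first configuration collapses to $\rho([x_3,x_4,x_5],x_1)=\rho(x_3,x_4)\rho(x_5,x_1)+\rho(x_4,x_5)\rho(x_3,x_1)+\rho(x_5,x_3)\rho(x_4,x_1)$, which is the second defining identity of a representation, while the second configuration collapses to $\rho(x_1,x_2)\rho(x_3,x_4)-\rho(x_3,x_4)\rho(x_1,x_2)=\rho([x_1,x_2,x_3],x_4)-\rho([x_1,x_2,x_4],x_3)$, which is the first defining identity. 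Reading these equivalences in both directions yields the stated ``if and only if.''

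The main obstacle I anticipate is the sign and relabeling bookkeeping in the case $d=1$: one must track the skew-symmetry of $\rho$ carefully when extracting $v$ from the various nested bracket slots, and then match the resulting compositions $\rho(\cdot,\cdot)\rho(\cdot,\cdot)$ against the two representation axioms term by term. I would also take care to confirm that the two chosen positions for the $V$-input genuinely exhaust the $d=1$ case up to the skew-symmetry established in the first paragraph, so that no further independent constraints on $\rho$ are produced beyond the two defining identities.
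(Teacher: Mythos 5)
Your proposal is correct. Note first that the paper itself offers no proof of this lemma: it is quoted from Kasymov with a citation, so there is no in-paper argument to compare against; your plan supplies the standard direct verification that the citation points to. The structure is sound throughout: skew-symmetry of the bracket on $A\oplus V$ follows from one transposition plus skew-symmetry of $\rho$; the degree grading ($A$ in degree $0$, $V$ in degree $1$) correctly shows the bracket kills any input with two or more $V$-slots, so the Fundamental Identity is vacuous in total degree $d\ge 2$ and reduces to the Fundamental Identity of $A$ in degree $0$. In the substantive case $d=1$, your two configurations do produce exactly the two representation axioms: with the $V$-element in the acting pair one gets $\rho([x_3,x_4,x_5],x_1)v=\rho(x_4,x_5)\rho(x_3,x_1)v+\rho(x_5,x_3)\rho(x_4,x_1)v+\rho(x_3,x_4)\rho(x_5,x_1)v$, which is the second axiom after relabeling, and with it in the acted triple one gets $\rho(x_1,x_2)\rho(x_3,x_4)-\rho(x_3,x_4)\rho(x_1,x_2)=\rho([x_1,x_2,x_3],x_4)-\rho([x_1,x_2,x_4],x_3)$, the first axiom (I checked both expansions, including the sign $\rho(x_3,[x_1,x_2,x_4])=-\rho([x_1,x_2,x_4],x_3)$). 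The one step you rightly flag as needing confirmation --- that the two chosen positions exhaust $d=1$ --- does go through: permuting within $\{X_1,X_2\}$ or within $\{X_3,X_4,X_5\}$ multiplies \emph{both} sides of the Fundamental Identity by the same sign (each of the three right-hand terms carries the inner bracket $[X_1,X_2,X_\ast]$, and the transpositions of $\{X_3,X_4,X_5\}$ permute the three terms up to a global sign), so no further constraints on $\rho$ arise; since $v$ ranges over all of $V$, reading the two displayed identities in both directions gives the equivalence as claimed.
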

\begin{remark}\label{rema5.2}
Let V be  A and $ \rho:\otimes^2A\rightarrow \gl(V)$ be   $\ad^*:\otimes^2A\rightarrow \gl(A^*)$.  Then there is a 3-Lie algebra structure on  $A\oplus A^*$, defined by
\begin{eqnarray}\label{eq:dir-sum}
% \nonumber to remove numbering (before each equation)
  [x_1+v_1,x_2+v_2,x_3+v_3]_{A\oplus  A^*} &=&[x_1,x_2,x_3]+\ad^*(x_1,x_2)v_3+\ad^*(x_3,x_1)v_2\\
  \nonumber &+&\ad^*(x_2,x_3)v_1,
\end{eqnarray}
for $x_i\in A, v_i \in A^*, 1\leq i \leq3 $. We denote this semi-direct product 3-Lie algebra by $A\ltimes_{\ad^*} A^*.$
\end{remark}

\begin{proposition}\label{pro5.4}
Let $\ad^*$ be the coadjoint representation of $(A,\ad)$,  $\{e_1,e_2,e_3\}$ be the basis of $A$, and $\{e_1^*,e_2^*,e_3^*\}$ be the dual basis.  Then the nonzero multiplications of the 6-dimensional 3-Lie algebra $A\ltimes_{\ad^*} A^*$  are as follows:
$$[e_1,e_2,e_3]_{A\oplus  A^*}=e_1,
~~ [e_1,e_2,e_1^*]_{A\oplus  A^*}=-e_3^*,
~~ [e_1,e_3,e_1^*]_{A\oplus  A^*}=e_2^*,
~~ [e_2,e_3,e_1^*]_{A\oplus  A^*}=-e_1^*.$$
\end{proposition}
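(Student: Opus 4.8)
The plan is to reduce everything to an explicit computation of the coadjoint action $\ad^*$ on the dual basis, and then to substitute into the semidirect-product bracket \eqref{eq:dir-sum}, organizing the work by how many of the three arguments lie in $A$ and how many lie in $A^*$.

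First I would record the coadjoint action. By the defining identity $\langle \ad^*(e_i,e_j)e_k^*, e_l\rangle = -\langle e_k^*, [e_i,e_j,e_l]\rangle$ and Proposition \ref{prop:3d-3Lie}, every bracket $[e_i,e_j,e_l]$ is a scalar multiple of $e_1$; hence the pairing vanishes unless $k=1$, so $\ad^*(e_i,e_j)$ annihilates both $e_2^*$ and $e_3^*$ for every pair $(i,j)$, and it remains only to evaluate $\ad^*(e_i,e_j)e_1^*$. Testing against $e_1,e_2,e_3$ and using $[e_1,e_2,e_3]=e_1$ together with skew-symmetry gives $\ad^*(e_1,e_2)e_1^*=-e_3^*$, $\ad^*(e_1,e_3)e_1^*=e_2^*$, and $\ad^*(e_2,e_3)e_1^*=-e_1^*$, while $\ad^*(e_i,e_j)e_1^*=0$ for all other pairs.

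Next I would substitute these values into \eqref{eq:dir-sum}, splitting into cases by the number of $A^*$-arguments; skew-symmetry of the bracket lets me restrict to a single canonical ordering, placing any $A^*$-arguments in the last slots. If all three arguments lie in $A$, the bracket is just the $A$-bracket, giving the single nonzero value $[e_1,e_2,e_3]=e_1$. If exactly one argument lies in $A^*$, placed in the third slot, then $v_1=v_2=0$ and only the term $\ad^*(x_1,x_2)v_3$ of \eqref{eq:dir-sum} survives, so $[e_i,e_j,e_k^*]=\ad^*(e_i,e_j)e_k^*$; by the previous step this is nonzero precisely for $k=1$ and $(i,j)\in\{(1,2),(1,3),(2,3)\}$, reproducing the three values $-e_3^*$, $e_2^*$, $-e_1^*$. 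If two or three arguments lie in $A^*$, then the $A$-bracket term vanishes and every $\ad^*$-term either has $v_i=0$ or carries a zero in one of its two algebra-slots, so all such brackets vanish.

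The computation is routine and I expect no genuine obstacle; the only thing requiring care is the sign bookkeeping in the coadjoint evaluations and a clean case split to certify that the four listed brackets are exhaustive. The structural observation that $\ad^*(e_i,e_j)$ is supported entirely on $e_1^*$ is what makes the one-$A^*$-argument case immediate and collapses the problem to the three short evaluations above.
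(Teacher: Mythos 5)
Your proposal is correct and follows essentially the same route as the paper: both determine $\ad^*(e_i,e_j)e_1^*$ by pairing against the basis $e_1,e_2,e_3$ via the defining identity of the coadjoint representation, and then substitute into \eqref{eq:dir-sum}. The only difference is one of thoroughness rather than method --- you observe up front that $\ad^*(e_i,e_j)$ annihilates $e_2^*$ and $e_3^*$ (since every bracket in $A$ is a multiple of $e_1$) and run an explicit case split over the number of $A^*$-arguments, thereby certifying that the four listed brackets are exhaustive, whereas the paper computes the same values by undetermined coefficients, handles the remaining cases with ``similarly,'' and leaves the vanishing of all other brackets implicit.
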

\begin{proof}
By Proposition \ref{prop:3d-3Lie} and \eqref{eq:dir-sum} we have $[e_1,e_2,e_3]_{A\oplus  A^*}=e_1$ and
%Calculate the left side of  $[e_1,e_2,e_1^*]=-e_3^*$.  By Proposition \ref{prop:3d-3Lie} and Lemma 5.2, we obtain
\begin{eqnarray*}
% \nonumber to remove numbering (before each equation)
  [e_1,e_2,e_1^*]_{A\oplus  A^*}&=&[e_1+0,e_2+0,0+e_1^*]_{A\oplus  A^*} \\
   &=& [e_1,e_2,0]+\ad^*(e_1,e_2)e_1^*+\ad^*(0,e_1)0+\ad^*(e_2,0)e_1\\
   &=&\ad^*(e_1,e_2)e_1^*.
\end{eqnarray*}
Set $\ad^*(e_1,e_2)e_1^*=k_1e_1^*+k_2e_2^*+k_3e_3^*$, $k_1,k_2,k_3 \in \mathbb{C}$.
On the one hand, by the definition of coadjoint representation,  we have
$$ \langle\ad^*(e_1,e_2)e_1^*,e_1\rangle=-\langle e_1^*,[e_1,e_2,e_1]\rangle=0.$$
On the other hand, by $\langle e_i^*,e_j\rangle= \delta_{ij}$, we have
$$ \langle k_1e_1^*+k_2e_2^*+k_3e_3^*,e_1\rangle= \langle k_1e_1^*,e_1\rangle=k_1\langle e_1^*,e_1\rangle=k_1.$$
 Thus we obtain $k_1=0$.

Considering the action of
 $\ad^*(e_1,e_2)e_1^*$ on $e_2$,  we have $$\langle\ad^*(e_1,e_2)e_1^*,e_2\rangle=-\langle e_1^*,[e_1,e_2,e_2]\rangle=0,$$
and
$$\langle k_1e_1^*+k_2e_2^*+k_3e_3^*,e_2\rangle= \langle k_1e_1^*,e_2\rangle=k_2\langle e_2^*,e_2\rangle=k_2.$$
Thus we have $k_2=0$.

Considering the action of
 $\ad^*(e_1,e_2)e_1^*$ on $e_3$,  we have
$$ \langle\ad^*(e_1,e_2)e_1^*,e_3\rangle=-\langle e_1^*,[e_1,e_2,e_3]\rangle=-\langle e_1^*,e_1\rangle=-1,$$
and
 $$\langle k_1e_1^*+k_2e_2^*+k_3e_3^*,e_3\rangle= \langle k_3e_3^*,e_3\rangle=k_3\langle e_3^*,e_3\rangle=k_3.$$
Thus we obtain $k_3=-1$.
Then $\ad^*(e_1,e_2)e_1^*=-e_3^*$.  Therefore, $[e_1,e_2,e_1^*]_{A\oplus  A^*}=-e_3^*$.
Similarly, we can get $[e_1,e_3,e_1^*]_{A\oplus  A^*}=e_2^*$ and
 $[e_2,e_3,e_1^*]_{A\oplus  A^*}=-e_1^*$. This finishes the proof.
\end{proof}

For any $r=x_1\otimes x_2  \in A \otimes A$, define the switching operator $\sigma_{12}:A \otimes A\rightarrow A \otimes A$ by
\begin{equation}
% \nonumber to remove numbering (before each equation)
  \sigma_{12} ( r) =x_2 \otimes x_1.
\end{equation}
The following result is the relationship between the $\mathcal{O}$-operators on   3-Lie algebras and solutions of 3-Lie CYBE.
%And we give the calculation method of skew-symmetric solutions of the 3-Lie classical Yang-Baxter equation in  the semi-direct product 3-Lie algebra in Lemma \ref{3-Lyb}.

\begin{lemma}\label{3-Lyb}{\rm(\cite{Bai-2})}
Let $(A,[\cdot,\cdot,\cdot])$  be a 3-Lie algebra, and let $T:A\rightarrow A$ be a linear map.  If $\bar{T}\in A^*\otimes A$ is a tensor defined by
% Let $T:V \rightarrow A$ be a linear map $\={T}\in V^\star \otimes A$
% And $\{e_1^*,\ldots,e_n^*\}$ be a basis  of $A^*$.
%\begin{eqnarray}
%% \nonumber to remove numbering (before each equation)
%  \bar{T} =\sum e_i^*\otimes Te_i \in A^*\otimes A
%\end{eqnarray}
%satisfies
\begin{equation}
% \nonumber to remove numbering (before each equation)
 \bar{T} (a,\xi) = \langle\xi,Ta\rangle, \quad \forall~ a\in A,\xi \in A^*,
\end{equation}
then $T$ is an  $\mathcal{O}$-operator if and only if
$
% \nonumber to remove numbering (before each equation)
  r=\bar{T}-\sigma_{12} ( \bar{T})
$
is a skew-symmetric solution of the 3-Lie CYBE in  the semi-direct product 3-Lie algebra  $A\ltimes_{\ad^* }A^*.$
\end{lemma}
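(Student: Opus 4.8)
The plan is to handle both directions of the biconditional at once through a single explicit evaluation of $[[r,r,r]]$ inside $\widehat A:=A\ltimes_{\ad^*}A^*$. First I would dispose of skew-symmetry, which is automatic: since $\sigma_{12}^2=\Id$, we get $\sigma_{12}(r)=\sigma_{12}(\bar T)-\bar T=-r$, so the entire content of the statement is the equivalence between $[[r,r,r]]=0$ and the $\mathcal{O}$-operator identity \eqref{eq:O-operator1.11}. Fixing a basis $\{e_i\}$ of $A$ with dual basis $\{e_i^*\}$, I would first rewrite $\bar T$ in the closed form $\bar T=\sum_i e_i^*\otimes Te_i$ (one checks $\langle e_k^*,Te_j\rangle$ agrees with the defining pairing), so that
$$r=\sum_i\big(e_i^*\otimes Te_i-Te_i\otimes e_i^*\big)\in\widehat A\otimes\widehat A.$$
Every leg of $r$ is homogeneous for the grading $\widehat A=A\oplus A^*$: one factor in $A^*$ and one in $A$.

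The main step is to expand $[[r,r,r]]$ and exploit this grading through the bracket of $\widehat A$ (Remark \ref{rema5.2}). The decisive structural facts are that a bracket of three homogeneous elements vanishes as soon as two of them lie in $A^*$, lands in $A$ when all three lie in $A$, and lands in $A^*$ through $\ad^*$ when exactly one lies in $A^*$. Tracking these types through the four summands of $[[r,r,r]]$, I would show that the only graded components of $\widehat A^{\otimes 4}$ that can survive are the four pieces carrying exactly one tensor slot in $A$ and the other three in $A^*$. Each such surviving slot collects four contributions: one from the bracket $[Te_i,Te_j,Te_k]$ of three $A$-legs (coming from the summand whose bracket uses the three $X$-legs), and three from brackets containing a single $A^*$-leg, which by definition produce $\ad^*$-terms.

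The heart of the computation is the reindexing identity that converts the coadjoint action back into $T$. Expanding $\ad^*(Te_j,Te_k)e_i^*$ in the dual basis and using $\langle\ad^*(Te_j,Te_k)e_i^*,e_t\rangle=-\langle e_i^*,[Te_j,Te_k,e_t]\rangle$ together with $T[Te_j,Te_k,e_t]=\sum_i\langle e_i^*,[Te_j,Te_k,e_t]\rangle Te_i$, I would collect the coefficient of $e_i^*\otimes e_j^*\otimes e_k^*$ in the component whose $A$-slot is the first one and obtain, up to an overall sign,
$$[Te_i,Te_j,Te_k]-T\big([Te_i,Te_j,e_k]+[Te_j,Te_k,e_i]+[Te_k,Te_i,e_j]\big),$$
which is exactly the defect of \eqref{eq:O-operator1.11}; the remaining three one-$A$-slot components reduce to the same defect after relabelling, using only skew-symmetry of $[\cdot,\cdot,\cdot]$. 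Hence $[[r,r,r]]=0$ for all index choices if and only if this defect vanishes for every basis triple, i.e.\ if and only if $T$ is an $\mathcal{O}$-operator, settling both implications simultaneously.

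The hard part will be purely organizational rather than conceptual. The naive expansion produces $4\times 2^3$ terms, and the real work is twofold: proving cleanly that every component other than the four one-$A$-slot pieces is killed by the ``two $A^*$-legs'' rule, and keeping the signs consistent through the $-\sigma_{12}$ in $r$, the skew-symmetry of the $\widehat A$-bracket, and the minus sign built into $\ad^*$, so that the four contributions assemble into precisely the cyclic combination of \eqref{eq:O-operator1.11} and not some other permutation of the three arguments.
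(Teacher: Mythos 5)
Your proposal is correct in both outline and detail, but note that the paper itself does not prove this lemma at all: it is imported verbatim from \cite{Bai-2}, where it is established for an arbitrary representation $(V,\rho)$ and the semidirect product $A\ltimes_{\rho^*}V^*$; the statement here is the specialization $\rho=\ad$. Your blueprint reconstructs essentially that proof. The two structural claims you lean on are both right: writing $r=\sum_i(e_i^*\otimes Te_i-Te_i\otimes e_i^*)$, every nonvanishing contribution to $[[r,r,r]]$ has exactly one tensor slot in $A$ and three in $A^*$ (any bracket with two or more $A^*$-legs dies, and tracking the type patterns through the four summands shows each surviving pattern has precisely one $A$-slot), and for each of the four one-$A$-slot components exactly one summand supplies the pure bracket $[Te_a,Te_b,Te_c]$ while the other three summands supply $\ad^*$-terms. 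I checked your reindexing step on the component with $A$-slot in position $1$: using $\ad^*(Te_b,Te_c)e_a^*=-\sum_t\langle e_a^*,[Te_b,Te_c,e_t]\rangle e_t^*$ and $\sum_a\langle e_a^*,z\rangle Te_a=T(z)$, the coefficient of $e_a^*\otimes e_b^*\otimes e_c^*$ assembles, with consistent signs, to
\begin{equation*}
-\Bigl([Te_a,Te_b,Te_c]-T\bigl([Te_a,Te_b,e_c]+[Te_b,Te_c,e_a]+[Te_c,Te_a,e_b]\bigr)\Bigr),
\end{equation*}
which is exactly the cyclic combination of \eqref{eq:O-operator1.11}, and the remaining three components are relabellings of the same defect; since the $e_a^*\otimes e_b^*\otimes e_c^*$ are linearly independent, $[[r,r,r]]=0$ is equivalent to the defect vanishing on all basis triples, hence (by trilinearity) to $T$ being an $\mathcal{O}$-operator, giving both implications as you say. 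The only caveats are the ones you already flag yourself: the argument is organizational, and the sign bookkeeping through the $-\sigma_{12}$, the skew-symmetry of the bracket, and the minus in $\ad^*$ must be carried out explicitly (as above) rather than asserted; with that done, the proof is complete and is, if anything, slightly more self-contained than the paper, which offers no argument of its own.
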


\begin{theorem}\label{thm:3-CYBE}
Let $\{e_1,e_2\,e_3\}$ be a basis of A, and $\{e_1^*,e_2^*,e_3^*\}$ be the dual basis. Then we can obtain the following thirty-one skew-symmetric solutions of 3-Lie  CYBE in  $A\ltimes_{\ad^*} A^*$ by the $\mathcal{O}$-operators on $(A, [ \cdot ,\cdot,\cdot] )$ given in Theorem \ref{O-operqtor} :
\begin{eqnarray*}
% \nonumber to remove numbering (before each equation)
 r_1 &=& e_2^*\otimes(a_{21}e_1+a_{22}e_2+a_{23}e_3)-(a_{21}e_1+a_{22}e_2+a_{23}e_3)\otimes e_2^*\\
          &+&e_3^*\otimes(a_{31}e_1+a_{32}e_2+a_{33}e_3)-(a_{31}e_1+a_{32}e_2+a_{33}e_3)\otimes e_3^*,\\
    r_2 &=& e_1^*\otimes a_{11}e_1-a_{11}e_1\otimes e_1^*\\
          &+& e_2^*\otimes(a_{21}e_1-a_{33}e_2+a_{23}e_3)-(a_{21}e_1-a_{33}e_2+a_{23}e_3)\otimes e_2^*\\
          &+&e_3^*\otimes(a_{31}e_1+a_{32}e_2+a_{33}e_3)-(a_{31}e_1+a_{32}e_2+a_{33}e_3)\otimes e_3^*,\\
    r_3 &=& e_1^*\otimes a_{13}e_3-a_{13}e_3\otimes e_1^*\\
          &+& e_2^*\otimes(a_{21}e_1+a_{23}e_3)-(a_{21}e_1+a_{23}e_3)\otimes e_2^*\\
          &+&e_3^*\otimes a_{33}e_3-a_{33}e_3\otimes e_3^*,\\
     r_4 &=& e_1^*\otimes a_{13}e_3-a_{13}e_3\otimes e_1^*\\
           &+& e_2^*\otimes a_{23}e_3-a_{23}e_3\otimes e_2^*\\
          &+&e_3^*\otimes (-\dfrac{a_{23}}{a_{13}}e_1+e_2+a_{33}e_3)-(-\dfrac{a_{23}}{a_{13}}e_1+e_2+a_{33}e_3)\otimes e_3^*,\\
     r_5 &=& e_1^*\otimes a_{13}e_3-a_{13}e_3\otimes e_1^*\\
           &+&e_2^*\otimes(a_{22}e_2+a_{23}e_3)-(a_{22}e_2+a_{23}e_3)\otimes e_2^*\\
           &+&e_3^*\otimes (a_{32}e_2+\dfrac{a_{23}a_{32}}{a_{22}}e_3)-(a_{32}e_2+\dfrac{a_{23}a_{32}}{a_{22}}e_3)\otimes e_3^* ,\\
     r_6 &=& e_1^*\otimes a_{13}e_3-a_{13}e_3\otimes e_1^*\\
           &+&e_2^*\otimes(a_{21}e_1+a_{22}e_2+a_{23}e_3)-(a_{21}e_1+a_{22}e_2+a_{23}e_3)\otimes e_2^*,\\
    r_7 &=& e_1^*\otimes a_{13}e_3-a_{13}e_3\otimes e_1^*\\
           &+&e_2^*\otimes(a_{21}e_1+e_2+a_{23}e_3)-(a_{21}e_1+e_2+a_{23}e_3)\otimes e_2^*\\
           &+&e_3^*\otimes(e_1+\dfrac{1}{a_{21}}e_2+(\dfrac{a_{23}+a_{21}a_{13}}{a_{21}})a_{23}e_3)-(e_1+\dfrac{1}{a_{21}}e_2+(\dfrac{a_{23}+a_{21}a_{13}}{a_{21}})a_{23}e_3)\otimes e_3^*,\\
    r_8 &=& e_1^*\otimes (e_1+e_3)-(e_1+e_3)\otimes e_1^*\\
           &+&e_2^*\otimes(a_{21}e_1+a_{22}e_2+a_{23}e_3)-(a_{21}e_1+a_{22}e_2+a_{23}e_3)\otimes e_2^*\\
           &+& e_3^*\otimes (-e_1+e_3)-(-e_1+e_3)\otimes e_3^*,\\
     r_9 &=& e_1^*\otimes (e_1+e_3)-(e_1+e_3)\otimes e_1^*\\
           &+&e_2^*\otimes a_{22}e_2-a_{22}e_2\otimes e_2^*\\
           &+& e_3^*\otimes (-e_1+a_{32}e_2-e_3)-(-e_1+a_{32}e_2-e_3)\otimes e_3^*,\\
    r_{10} &=& e_1^*\otimes a_{12}e_2-a_{12}e_2\otimes e_1^*\\
           &+& e_2^*\otimes a_{22}e_2-a_{22}e_2\otimes e_2^*\\
           &+& e_3^*\otimes (a_{31}e_1+a_{32}e_2)-(a_{31}e_1+a_{32}e_2)\otimes e_3^*,\\
     r_{11} &=& e_1^*\otimes a_{12}e_2-a_{12}e_2\otimes e_1^*\\
           &+& e_3^*\otimes (a_{31}e_1+a_{32}e_2+a_{33}e_3)-(a_{31}e_1+a_{32}e_2+a_{33}e_3)\otimes e_3^*,\\
    r_{12} &=& e_1^*\otimes a_{12}e_2-a_{12}e_2\otimes e_1^*\\
           &+& e_2^*\otimes (a_{22}e_2+e_3)-(a_{22}e_2+e_3)\otimes e_2^*\\
           &+& e_3^*\otimes (a_{22}a_{33}e_2+a_{33}e_3)-(a_{22}a_{33}e_2+a_{33}e_3)\otimes e_3^*,\\
    r_{13} &=& e_1^*\otimes a_{12}e_2-a_{12}e_2\otimes e_1^*\\
           &+& e_2^*\otimes (e_1+a_{22}e_2+e_3)-(e_1+a_{22}e_2+e_3)\otimes e_2^*\\
           &+& e_3^*\otimes (a_{31}e_1+(a_{22}a_{31}-a_{12})e_2+a_{31}e_3)-(a_{31}e_1+(a_{22}a_{31}-a_{12})e_2+a_{31}e_3)\otimes e_3^*,\\
     r_{14} &=& e_1^*\otimes(e_2+e_3)-(e_2+e_3)\otimes e_1^*\\
           &+& e_2^*\otimes (a_{21}e_1+a_{22}e_2+a_{23}e_3)-(a_{21}e_1+a_{22}e_2+a_{23}e_3)\otimes e_2^*\\
           &+& e_3^*\otimes (-a_{21}e_1+a_{32}e_2+a_{32}e_3)-(-a_{21}e_1+a_{32}e_2+a_{32}e_3)\otimes e_3^*,\\
      r_{15} &=& e_1^*\otimes(e_2+e_3)-(e_2+e_3)\otimes e_1^*\\
           &+& e_2^*\otimes (a_{32}(a_{22}-a_{33})e_1+a_{22}e_2+a_{23}e_3)-(a_{32}(a_{22}-a_{33})e_1+a_{22}e_2+a_{23}e_3)\otimes e_2^*\\
           &+& e_3^*\otimes (a_{32}e_2+a_{32}e_3)-(a_{32}e_2+a_{32}e_3)\otimes e_3^*,\\
     r_{16} &=& e_1^*\otimes(e_2+e_3)-(e_2+e_3)\otimes e_1^*\\
           &+& e_2^*\otimes (a_{22}e_2+a_{23}e_3)-(a_{22}e_2+a_{23}e_3)\otimes e_2^*\\
           &+& e_3^*\otimes (a_{22}(a_{33}-a_{32})e_1+a_{32}e_2+a_{33}e_3)-(a_{22}(a_{33}-a_{32})e_1+a_{32}e_2+a_{33}e_3)\otimes e_3^*,\\
   r_{17} &=& e_1^*\otimes(a_{12}e_2+a_{13}e_3)-(a_{12}e_2+a_{13}e_3)\otimes e_1^*\\
          &+&e_2^*\otimes a_{23}e_3-a_{23}e_3\otimes e_2^*,\\
          &+&e_3^*\otimes a_{33}e_3-a_{33}e_3\otimes e_3^*,\\
      r_{18} &=& e_1^*\otimes(e_2+e_3)-(e_2+e_3)\otimes e_1^*\\
           &+& e_2^*\otimes (a_{22}e_2+a_{23}e_3)-(a_{22}e_2+a_{23}e_3)\otimes e_2^*\\
           &+& e_3^*\otimes (a_{32}e_2+\dfrac{a_{23}a_{32}}{a_{22}}e_3)-(a_{32}e_2+\dfrac{a_{23}a_{32}}{a_{22}}e_3)\otimes e_3^*,\\
     r_{19} &=& e_1^*\otimes(e_2+e_3)-(e_2+e_3)\otimes e_1^*\\
           &+& e_2^*\otimes (a_{21}e_1+e_2+2e_3)-(a_{21}e_1+e_2+2e_3)\otimes e_2^*\\
           &+& e_3^*\otimes (a_{21}e_1+e_2+2e_3)-(a_{21}e_1+e_2+2e_3)\otimes e_3^*,\\
    r_{20} &=& e_1^*\otimes(a_{11}e_1+a_{12}e_2)-(a_{11}e_1+a_{12}e_2)\otimes e_1^*\\
                &+& e_3^*\otimes(a_{31}e_1+a_{32}e_2)-(a_{31}e_1+a_{32}e_2)\otimes e_3^*,\\
      r_{21} &=& e_1^*\otimes(a_{11}e_1+a_{12}e_2)-(a_{11}e_1+a_{12}e_2)\otimes e_1^*\\
                &+& e_2^*\otimes (-a_{33}e_2+a_{23}e_3)-(-a_{33}e_2+a_{23}e_3)\otimes e_2^*\\
                &+& e_3^*\otimes(-\dfrac{a_{33}^2}{a_{23}}e_2+a_{33}e_3)-(-\dfrac{a_{33}^2}{a_{23}}e_2+a_{33}e_3)\otimes e_3^*,\\
      r_{22} &=& e_1^*\otimes(a_{33}e_1-a_{33}^2e_2)-(a_{33}e_1-a_{33}^2e_2)\otimes e_1^*\\
                &+& e_2^*\otimes (e_1-a_{33}e_2)-(e_1-a_{33}e_2)\otimes e_2^*\\
                &+& e_3^*\otimes (a_{31}e_1+a_{32}e_2+a_{33}e_3)- (a_{31}e_1+a_{32}e_2+a_{33}e_3)\otimes e_3^*,\\
      r_{23} &=& e_1^*\otimes(a_{11}e_1+e_2)-(a_{11}e_1+e_2)\otimes e_1^*\\
                &+& e_2^*\otimes (e_1-e_2+e_3)-(e_1-e_2+e_3)\otimes e_2^*\\
                &+& e_3^*\otimes ((1-a_{11})e_1-2e_2+e_3)-((1-a_{11})e_1-2e_2+e_3)\otimes e_3^*,\\
      r_{24} &=& e_1^*\otimes(e_1+e_2)-(e_1+e_2)\otimes e_1^*\\
                &+& e_2^*\otimes (-e_1-e_2)-(-e_1-e_2)\otimes e_2^*\\
                &+& e_3^*\otimes (a_{31}e_1+a_{32}e_2-e_3)-(a_{31}e_1+a_{32}e_2-e_3)\otimes e_3^*,\\
      r_{25} &=& e_1^*\otimes(e_1+e_2)-(e_1+e_2)\otimes e_1^*\\
                &+& e_2^*\otimes (e_1+(3+a_{23}a_{31})e_2+a_{23}e_3)- (e_1+(3+a_{23}a_{31})e_2+a_{23}e_3)\otimes e_2^*\\
                &+& e_3^*\otimes (a_{31}e_1-\dfrac{2}{a_{23}}e_2-e_3)-(a_{31}e_1-\dfrac{2}{a_{23}}e_2-e_3)\otimes e_3^*,\\
      r_{26} &=& e_1^*\otimes(a_{11}e_1+e_2+e_3)-(a_{11}e_1+e_2+e_3)\otimes e_1^*\\
                &+& e_2^*\otimes (-a_{33}e_2+a_{23}e_3)- (-a_{33}e_2+a_{23}e_3)\otimes e_2^*\\
                &+& e_3^*\otimes (-\dfrac{a_{33}^2}{a_{23}}e_2+a_{33}e_3)-(-\dfrac{a_{33}^2}{a_{23}}e_2+a_{33}e_3)\otimes e_3^*,\\
      r_{27} &=& e_1^*\otimes(e_1+e_2+e_3)-(e_1+e_2+e_3)\otimes e_1^*\\
                &+& e_2^*\otimes e_2- e_2\otimes e_2^*\\
                &+& e_3^*\otimes (-e_1+a_{32}e_2-e_3)- (-e_1+a_{32}e_2-e_3)\otimes e_3^*,\\
      r_{28} &=& e_1^*\otimes(-e_1+e_2+e_3)-(-e_1+e_2+e_3)\otimes e_1^*\\
                &+& e_2^*\otimes (a_{22}e_2+(1-a_{33})e_3)-(a_{22}e_2+(1-a_{33})e_3)\otimes e_2^*\\
                &+& e_3^*\otimes (-e_1+(1-a_{22})e_2+a_{33}e_3)- (-e_1+(1-a_{22})e_2+a_{33}e_3)\otimes e_3^*,\\
      r_{29} &=& e_1^*\otimes(e_2+e_3)-(e_2+e_3)\otimes e_1^*\\
                &+& e_2^*\otimes (a_{33}(a_{23}-a_{33})e_1-a_{33}e_2+a_{23}e_3)- (a_{33}(a_{23}-a_{33})e_1-a_{33}e_2+a_{23}e_3)\otimes e_2^*\\
                &+& e_3^*\otimes(-a_{33}e_2+a_{33}e_3)-(-a_{33}e_2+a_{33}e_3)\otimes e_3^*,\\
     r_{30} &=& e_1^*\otimes(e_1+e_2+e_3)-(e_1+e_2+e_3)\otimes e_1^*\\
                &+& e_2^*\otimes (e_1+(\sqrt{3}-1)e_3)-  (e_1+(\sqrt{3}-1)e_3)\otimes e_2^*\\
                &+& e_3^*\otimes(e_1-(\sqrt{3}+1)e_2)-(e_1-(\sqrt{3}+1)e_2)\otimes e_3^*,\\
      r_{31} &=& e_1^*\otimes(e_1+e_2+e_3)-(e_1+e_2+e_3)\otimes e_1^*\\
                &+& e_2^*\otimes (e_1-(\sqrt{3}+1)e_3)- (e_1-(\sqrt{3}+1)e_3)\otimes e_2^*\\
                &+& e_3^*\otimes(e_1+(\sqrt{3}-1)e_2)-(e_1+(\sqrt{3}-1)e_2)\otimes e_3^*.
\end{eqnarray*}
\end{theorem}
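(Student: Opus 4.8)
The plan is to reduce everything to Lemma \ref{3-Lyb}, which already supplies the entire conceptual content. That lemma asserts that a linear map $T\colon A\to A$ is an $\mathcal{O}$-operator with respect to the adjoint representation if and only if the associated tensor $r=\bar{T}-\sigma_{12}(\bar{T})$ is a skew-symmetric solution of the 3-Lie CYBE in $A\ltimes_{\ad^*}A^*$, where $\sigma_{12}$ is understood as the flip on $(A\oplus A^*)^{\otimes 2}$. Since each matrix $O_i$ of Theorem \ref{O-operqtor} is, by that theorem, the matrix of an $\mathcal{O}$-operator $T_i$ on the unique non-trivial $3$-dimensional $3$-Lie algebra of Proposition \ref{prop:3d-3Lie}, Lemma \ref{3-Lyb} immediately guarantees that the corresponding $r_i$ is a skew-symmetric solution. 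Thus no new identity needs to be verified; the entire task is to make the tensors $r_i$ explicit.

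First I would write $\bar{T}$ in coordinates. Using $\langle e_k^*,e_m\rangle=\delta_{km}$ together with $T(e_k)=\sum_{m}a_{km}e_m$, the defining relation $\bar{T}(a,\xi)=\langle\xi,Ta\rangle$ forces
\begin{equation*}
\bar{T}=\sum_{k=1}^{3}e_k^*\otimes T(e_k)=\sum_{k,m=1}^{3}a_{km}\,e_k^*\otimes e_m,
\end{equation*}
so that
\begin{equation*}
r=\bar{T}-\sigma_{12}(\bar{T})=\sum_{k=1}^{3}\bigl(e_k^*\otimes T(e_k)-T(e_k)\otimes e_k^*\bigr).
\end{equation*}
This single formula converts any matrix $O_i$ directly into its tensor $r_i$: one reads the rows $T(e_1),T(e_2),T(e_3)$ off the matrix and antisymmetrises each against the corresponding dual basis vector $e_k^*$.

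The remaining step is the purely mechanical substitution of the thirty-one matrices $O_1,\dots,O_{31}$ into the displayed formula. For instance, with $O_1$ one has $T(e_1)=0$, $T(e_2)=a_{21}e_1+a_{22}e_2+a_{23}e_3$ and $T(e_3)=a_{31}e_1+a_{32}e_2+a_{33}e_3$, and the formula instantly reproduces $r_1$; the other thirty cases are identical in kind. I do not expect any genuine obstacle, since the skew-symmetry of each $r_i$ is built into the formula and the 3-Lie CYBE holds for free by Lemma \ref{3-Lyb} combined with Theorem \ref{O-operqtor}. The only place where a sign or index slip could occur is in matching the conventions for $\bar{T}$ and for the switch $\sigma_{12}$ to those of Lemma \ref{3-Lyb}, so that the antisymmetrisation lands with exactly the displayed signs; once $r_1$ is confirmed against $O_1$ this convention is pinned down and the remaining thirty tensors follow by the same read-off, completing the proof.
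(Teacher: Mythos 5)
Your proposal is correct and matches the paper's own proof essentially verbatim: the paper likewise invokes Lemma \ref{3-Lyb} together with Theorem \ref{O-operqtor}, writes $\bar{T}=\sum_{i=1}^{3}e_i^*\otimes Te_i$, and obtains each $r_i$ by substituting the rows of $O_i$, with no further identity checked. If anything, your explicitly antisymmetrised formula $r=\sum_{k}\bigl(e_k^*\otimes T(e_k)-T(e_k)\otimes e_k^*\bigr)$ is cleaner than the paper's displayed expression, which omits the parentheses grouping the subtracted terms.
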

\begin{proof}
Since $\{e_1,e_2,e_3\}$ is a basis of $A$, and $\{e_1^*,e_2^*,e_3^*\}$ is the dual basis. Then we have
\begin{equation}
% \nonumber to remove numbering (before each equation)
  \bar{T} =\sum_{i=1}^3 e_i^*\otimes Te_i \in A^*\otimes A.
\end{equation}
If $T$ is an $\mathcal{O}$-operator on 3-Lie algebra, then by Lemma \ref{3-Lyb} we have
\begin{equation}
% \nonumber to remove numbering (before each equation)
  r =e_1^*\otimes Te_1+e_2^*\otimes Te_2+e_3^*\otimes Te_3-Te_1\otimes e_1^*+Te_2\otimes e_2^*+Te_3\otimes e_3^*,
\end{equation}
is a skew-symmetric solution of the 3-Lie CYBE in  the semi-direct product 3-Lie algebra $A\ltimes_{\ad^* }A^*$.

  Using the $\mathcal{O}$-operator $O_1$  on 3-dimensional complex 3-Lie algebra
  in Theorem \ref{O-operqtor}, we have $$ Te_1=0,\quad Te_2=a_{21}e_1+a_{22}e_2+a_{23}e_3,\quad Te_3=a_{31}e_1+a_{32}e_2+a_{33}e_3.$$
 Therefore, we obtain $r_1$. Other thirty skew-symmetric solutions of 3-Lie  CYBE in  $A\ltimes_{\ad^*} A^*$ induced by the $\mathcal{O}$-operators in Theorem \ref{O-operqtor} can be calculated by the same method. This finishes the proof.
\end{proof}

\section{Proof of Theorem \ref{O-operqtor}}

 Suppose that the linear map  $T:A\rightarrow A$  is  an $\mathcal{O}$-operator  associated to $(A,\ad)$. If $(A,[ \cdot ,\cdot,\cdot] )$ is a 3-dimensional 3-Lie algebra, then  $\eqref{eq:O-operator1.1}$ can be simplified to
\begin{eqnarray}
% \nonumber to remove numbering (before each equation)
[Te_1,Te_2,Te_3]&=&T([Te_1,Te_2,e_3]+[Te_3,Te_1,e_2]+[Te_2,Te_3,e_1]).\label{O-operator1.2}
\end{eqnarray}

By Lemma \ref{lem:O}, \eqref{eq:stru-cons2} can be simplified to
\begin{eqnarray}
  &&a_{12}(a_{11}a_{21}-a_{21}a_{33}+a_{23}a_{31})+a_{13}(a_{21}a_{32}+a_{11}a_{31}-a_{22}a_{31})\label{eq:3d-O-equ1}  \\
\nonumber  && -a_{11}^2(a_{22}+a_{33})=0,  \\
  &&a_{12}(a_{11}a_{22}-a_{12}a_{21}-a_{31}a_{13}+a_{33}a_{11}+a_{22}a_{33}-a_{23}a_{32})=0,\label{eq:3d-O-equ2} \\
  &&a_{13}(a_{11}a_{22}-a_{12}a_{21}-a_{13}a_{31}+a_{33}a_{11}+a_{22}a_{33}-a_{23}a_{32})=0.\label{eq:3d-O-equ3}
\end{eqnarray}
To solve the cubic equations \eqref{eq:3d-O-equ1}-\eqref{eq:3d-O-equ3} of nine elements $a_{ij},~~ 1\leq i,j \leq 3$,  we distinguish
the two cases depending on whether $a_{12}=0$ or not .

{\bf Case 1:} $a_{12}=0$: Then $\eqref{eq:3d-O-equ1}$ implies $a_{13}(a_{21}a_{32}+a_{11}a_{31}-a_{22}a_{31})=a_{11}^2(a_{22}+a_{33})$ . There are two subcases:$(a)a_{13}=0$; $(b)a_{13}\neq 0$.

(a) Assume  $a_{12}=0$ and $a_{13}=0$, then \eqref{eq:3d-O-equ2} and \eqref{eq:3d-O-equ3} hold. \eqref{eq:3d-O-equ1} also implies $a_{11}^2(a_{22}+a_{33})=0$. There are two subcases $(a_{1}):a_{11}=0$;  $(a_{2}):a_{11}\neq 0$.

$(a_{1})$ If $a_{11}=0$, then we get $O_1$.

$(a_{2})$ If $a_{11}\neq 0$, then we get $O_2$.

(b) Assume  $a_{12}=0$ and $a_{13}\neq 0$, then  \eqref{eq:3d-O-equ2} holds.   \eqref{eq:3d-O-equ3} implies $a_{11}a_{22}-a_{13}a_{31}+a_{33}a_{11}+a_{22}a_{33}-a_{23}a_{32}=0$. There are two subcases $(b_{1}): a_{11}=0, (b_{2}):a_{11} \neq 0.$

$(b_{1})$ If $a_{11}=0$, then \eqref{eq:3d-O-equ3} implies $a_{22}a_{33}-a_{13}a_{31}-a_{23}a_{32}=0$. And \eqref{eq:3d-O-equ1}  implies $a_{21}a_{32}-a_{22}a_{31}=0$.  There are two subcases $(b_{11}): a_{22}=0, (b_{12}):a_{22} \neq 0.$

$(b_{11})$ If $a_{22}=0$, then \eqref{eq:3d-O-equ3} implies $a_{13}a_{31}+a_{23}a_{32}=0$. Then  \eqref{eq:3d-O-equ1}  implies $a_{21}a_{32}=0$.   There are two subcases $(b_{111}): a_{32}=0, (b_{112}):a_{32} \neq 0.$

$(b_{111})$ If $a_{32}=0$, then \eqref{eq:3d-O-equ1} holds, \eqref{eq:3d-O-equ3} implies $a_{13}a_{31}=0$. There are two subcases  $(b_{1111}): a_{31}=0, (b_{1112}): a_{31}\neq 0. $

 $(b_{1111})$ If  $a_{31}=0$,  then \eqref{eq:3d-O-equ3} holds. Then we get $O_3$.

 $(b_{1112})$ If  $a_{31} \neq 0$, then  \eqref{eq:3d-O-equ3} implies $a_{13}=0$. It is in contradiction with $a_{13} \neq 0$.

$(b_{112})$ If $a_{32}\neq0$, then \eqref{eq:3d-O-equ1} implies $a_{21}=0$. Taking  $a_{32}=1$,  \eqref{eq:3d-O-equ3} implies $a_{31}=-\dfrac{a_{23}}{a_{13}}$, then we get $O_4$.

$(b_{12})$ If $a_{22} \neq 0$, then \eqref{eq:3d-O-equ3} implies $a_{22}a_{33}=a_{23}a_{32}+a_{13}a_{31}$. And \eqref{eq:3d-O-equ1}  implies $a_{21}a_{32}=a_{22}a_{31}$. There are two subcases $(b_{121}): a_{31}=0, (b_{122}):a_{31} \neq 0.$

$(b_{121})$ If $a_{31}=0$,  then \eqref{eq:3d-O-equ1} implies $a_{21}a_{32}=0$.  \eqref{eq:3d-O-equ3} implies $a_{22}a_{33}=a_{23}a_{32}$. There are two subcases $(b_{1211}): a_{21}=0, (b_{1212}):a_{21} \neq 0.$

$(b_{1211})$ If $a_{21}=0$, then \eqref{eq:3d-O-equ1} holds.   \eqref{eq:3d-O-equ3} implies $a_{22}a_{33}=a_{23}a_{32}$.
Since $a_{22} \neq 0$, then \eqref{eq:3d-O-equ3} implies $a_{33}= \dfrac{a_{23}a_{32}}{a_{22}}$,  we get $O_5$.

$(b_{1212})$ If $a_{21}\neq0$, then \eqref{eq:3d-O-equ1}  implies $a_{32}=0$, \eqref{eq:3d-O-equ3} implies $a_{22}a_{33}=0$. Since $a_{22} \neq 0$,  thus $a_{33}=0$,  we get $O_6$.

$(b_{122})$ If $a_{31}\neq0$, taking $a_{31}=1$, $a_{22}=1$, then  \eqref{eq:3d-O-equ1}  implies $a_{32}=\dfrac{1}{a_{21}}$. Then \eqref{eq:3d-O-equ3} implies $a_{33}=\dfrac{a_{23}}{a_{21}}+a_{13}$,  we get $O_7$.

$(b_{2})$ If $a_{11}\neq0$, taking $a_{11}=a_{13}=1$, \eqref{eq:3d-O-equ3} implies $a_{22}(1+a_{33})-a_{31}+a_{33}-a_{23}a_{32}=0$.  Taking $a_{31}=a_{33}=-1$, then \eqref{eq:3d-O-equ1} implies    $a_{21}a_{32}=0$.   \eqref{eq:3d-O-equ3} implies $a_{23}a_{32}=0$.
 There are two subcases  $(b_{21}): a_{32}=0, (b_{22}): a_{32}\neq 0$.

$(b_{21})$ If $a_{32}=0$, then \eqref{eq:3d-O-equ1} and \eqref{eq:3d-O-equ3}  hold.  We get $O_8$.

$(b_{22})$ If $a_{32}\neq0$, then \eqref{eq:3d-O-equ1} implies    $a_{21}=0$.  Then \eqref{eq:3d-O-equ3} implies $a_{23}=0$,  we get $O_9$.

{\bf Case 2:} $a_{12} \neq0$: then  \eqref{eq:3d-O-equ2}  and \eqref{eq:3d-O-equ3} imply $a_{11}(a_{22}+a_{33} )-a_{12}a_{21}-a_{31}a_{13}+a_{22}a_{33}-a_{23}a_{32}=0 $.  We distinguish two subcases: $(c): a_{11}=0$, and $(d): a_{11}\neq 0$.

 $(c)$ Assume $ a_{11}=0$, then  \eqref{eq:3d-O-equ2}  and \eqref{eq:3d-O-equ3} imply $a_{22}a_{33}-a_{31}a_{13}-a_{12}a_{21}-a_{23}a_{32}=0 $. Then   \eqref{eq:3d-O-equ1} implies $a_{12}(a_{23}a_{31}-a_{21}a_{33})+a_{13}(a_{21}a_{32}-a_{22}a_{31})=0.$
 There are two subcases: $(c_1): a_{13}=0$, and $(c_2): a_{13}\neq 0$.

  $(c_1)$ If $ a_{13}=0$, then  \eqref{eq:3d-O-equ3} holds.  Then \eqref{eq:3d-O-equ1} implies $a_{23}a_{31}=a_{21}a_{33},$ \eqref{eq:3d-O-equ2}  implies $a_{22}a_{33}-a_{12}a_{21}-a_{23}a_{32}=0 $.  Then there are two subcases: $(c_{11}): a_{23}=0$, and $(c_{12}): a_{23}\neq 0$.

  $(c_{11})$ If $ a_{23}=0$, then \eqref{eq:3d-O-equ1} implies $a_{21}a_{33}=0 $.  \eqref{eq:3d-O-equ2} implies $a_{22}a_{33}-a_{12}a_{21}=0 $. Then there are two subcases: $(c_{111}): a_{21}=0$, and $(c_{112}): a_{21}\neq 0$.

 $(c_{111})$ If $ a_{21}=0$,then \eqref{eq:3d-O-equ1} holds.  Then \eqref{eq:3d-O-equ2}   implies $a_{22}a_{33}=0 $. Then there are two subcases: $(c_{1111}): a_{33}=0$, and $(c_{1112}): a_{33}\neq 0$.

  $(c_{1111})$ If $ a_{33}=0$,  \eqref{eq:3d-O-equ2}  holds.  We get $O_{10}$.

$(c_{1112})$ If $ a_{33} \neq 0$, then  \eqref{eq:3d-O-equ2}  implies $a_{22}=0$. We get $O_{11}$.

 $(c_{112})$ If $ a_{21}\neq0$, then \eqref{eq:3d-O-equ1} implies $a_{33}=0$. \eqref{eq:3d-O-equ2} implies $a_{12}=0 $. It is in contradiction with $a_{12} \neq 0$.

$(c_{12})$ If $ a_{23}\neq0$, taking $ a_{23}=1$,  then \eqref{eq:3d-O-equ1} implies $a_{31}=a_{21}a_{33}$. Then   \eqref{eq:3d-O-equ2} implies $a_{22}a_{33}-a_{12}a_{21}-a_{32}=0 $. Then there are two subcases: $(c_{121}): a_{21}=0$, and $(c_{122}): a_{21}\neq 0$.

 $(c_{121})$ If $ a_{21}=0$,  then \eqref{eq:3d-O-equ1} implies $a_{31}=0$ . Then  \eqref{eq:3d-O-equ2}  implies $a_{32}=a_{22}a_{33}$. We get $O_{12}$.

  $(c_{122})$ If $ a_{21}\neq 0$, taking $ a_{21}=1$,   then \eqref{eq:3d-O-equ1} implies $a_{33}=a_{31}$.  Then   \eqref{eq:3d-O-equ2} implies $a_{32}=a_{22}a_{31}-a_{12}$,  we get $O_{13}$.

  $(c_2)$ If $ a_{13}\neq 0$, taking $ a_{12}=a_{13}=1$, then  \eqref{eq:3d-O-equ2}  and \eqref{eq:3d-O-equ3} imply $a_{22}a_{33}-a_{31}-a_{21}-a_{23}a_{32}=0 $.  Then   \eqref{eq:3d-O-equ1} implies $a_{31}(a_{23}-a_{22})+a_{21}(a_{32}-a_{33})=0.$    Then there are two subcases: $(c_{21}): a_{33}=a_{32}$, and $(c_{22}):  a_{33}\neq a_{32}$.

  $(c_{21})$ If $a_{33}=a_{32}$, then   \eqref{eq:3d-O-equ1} implies $a_{31}(a_{23}-a_{22})=0$. Then  \eqref{eq:3d-O-equ2}  and \eqref{eq:3d-O-equ3} imply $a_{22}a_{32}-a_{31}-a_{21}-a_{23}a_{32}=0 $.
 Then there are two subcases: $(c_{211}): a_{23}=a_{22}$, and $(c_{212}):  a_{23}\neq a_{22}$.

$(c_{211})$ If $a_{23}=a_{22}$, then    \eqref{eq:3d-O-equ1} holds. Then  \eqref{eq:3d-O-equ2}  and \eqref{eq:3d-O-equ3} imply $a_{31}=-a_{21}$,  we get $O_{14}$.

$(c_{212})$ If $a_{23}\neq a_{22}$,   then   \eqref{eq:3d-O-equ1} implies $a_{31}=0.$ Then  \eqref{eq:3d-O-equ2}  and \eqref{eq:3d-O-equ3} imply $a_{21}=a_{32}(a_{22}-a_{23}) $. We get $O_{15}$.

  $(c_{22})$ If $a_{23}\neq a_{22}$,  then there are two subcases: $(c_{221}): a_{23}=a_{22}$, and $(c_{222}):  a_{23}\neq a_{22}$.

  $(c_{221})$ If $a_{23}=a_{22}$, then   \eqref{eq:3d-O-equ1} implies $a_{21}=0$.  Then  \eqref{eq:3d-O-equ2}  and \eqref{eq:3d-O-equ3} imply $a_{31}=a_{22}(a_{33}-a_{32})$.  We get $O_{16}$.

$(c_{222})$ If $a_{23}\neq a_{22}$,  then   \eqref{eq:3d-O-equ1} implies $a_{31}(a_{23}-a_{22})+a_{21}(a_{32}-a_{33})=0.$    Then  \eqref{eq:3d-O-equ2}  and \eqref{eq:3d-O-equ3} imply $a_{22}a_{33}-a_{31}-a_{21}-a_{23}a_{32}=0 $.  Then there are two subcases: $(c_{2221}): a_{21}=0$, and $(c_{2222}):  a_{21}\neq0$.

$(c_{2221})$ If $a_{21}=0$, then   \eqref{eq:3d-O-equ1} implies $a_{31}=0$.  Then  \eqref{eq:3d-O-equ2}  and \eqref{eq:3d-O-equ3} imply $a_{22}a_{33}=a_{23}a_{32} $. If $a_{22}=0$, since  $a_{23}\neq a_{22}$, then   $a_{23}\neq0$. Then  \eqref{eq:3d-O-equ2}  and \eqref{eq:3d-O-equ3} imply $a_{32}=0$.  We get $O_{17}$.

 If $a_{22}\neq 0$, then  \eqref{eq:3d-O-equ2}  and \eqref{eq:3d-O-equ3} imply $a_{33}=\dfrac{a_{23}a_{32}}{a_{22}} $.  We get $O_{18}$.

 $(c_{2222})$ If $a_{21}\neq 0$,  taking $a_{23}=2$, $a_{22}=1$, $a_{33}=2$, $a_{32}=1$, then   \eqref{eq:3d-O-equ1} ,  \eqref{eq:3d-O-equ2}  and \eqref{eq:3d-O-equ3} imply $a_{31}=a_{21}$. We get $O_{19}$.

 $(d)$ Assume $ a_{11} \neq 0$, then  \eqref{eq:3d-O-equ2}  and \eqref{eq:3d-O-equ3} imply  $a_{11}a_{22}-a_{12}a_{21}-a_{13}a_{31}+a_{33}a_{11}+a_{22}a_{33}-a_{23}a_{32}=0.$  \eqref{eq:3d-O-equ1} implies $a_{12}(a_{11}a_{21}-a_{21}a_{33}+a_{23}a_{31})+a_{13}(a_{21}a_{32}+a_{11}a_{31}-a_{22}a_{31})-a_{11}^2(a_{22}+a_{33})=0$.
 Then there are two subcases: $(d_1): a_{13}=0$, and $(d_2): a_{13}\neq 0$.

$(d_1)$ If $ a_{13}=0$, then \eqref{eq:3d-O-equ1} implies $a_{12}(a_{11}a_{21}-a_{21}a_{33}+a_{23}a_{31})-a_{11}^2(a_{22}+a_{33})=0$, then  \eqref{eq:3d-O-equ2}  and \eqref{eq:3d-O-equ3} imply   $a_{11}a_{22}-a_{12}a_{21}+a_{33}a_{11}+a_{22}a_{33}-a_{23}a_{32}=0.$    Then there are two subcases: $(d_{11}): a_{22}=-a_{33}$, and $(d_{12}): a_{22} \neq -a_{33}$.

$(d_{11})$ If $ a_{22}=-a_{33}$, then \eqref{eq:3d-O-equ1} implies $a_{12}(a_{11}a_{21}-a_{21}a_{33}+a_{23}a_{31})=0$. Since  $a_{12} \neq 0$, we have $a_{21}(a_{11}-a_{33})+a_{23}a_{31}=0$.  Then  \eqref{eq:3d-O-equ2}  and \eqref{eq:3d-O-equ3} imply   $a_{12}a_{21}+a_{33}^2+a_{23}a_{32}=0.$ Then there are two subcases: $(d_{111}): a_{21}=0$, and $(d_{112}): a_{21}\neq 0$.

$(d_{111})$ If $ a_{21}=0$, then \eqref{eq:3d-O-equ1} implies $a_{23}a_{31}=0$. Then  \eqref{eq:3d-O-equ2}  and \eqref{eq:3d-O-equ3} imply  $a_{33}^2+a_{23}a_{32}=0.$

If $ a_{23}=0$,  then  \eqref{eq:3d-O-equ2}  and \eqref{eq:3d-O-equ3} imply  $a_{33}=a_{22}=0$, we get $O_{20}$.

If $a_{23}\neq0$, then \eqref{eq:3d-O-equ1} implies $a_{31}=0$.  Then  \eqref{eq:3d-O-equ2}  and \eqref{eq:3d-O-equ3} imply   %$a_{33}^2+a_{23}a_{32}=0$,
$a_{32}=-\dfrac{a_{33}^2}{a_{23}}$,   we get $O_{21}$.

$(d_{112})$ If $ a_{21} \neq 0$,  taking  $ a_{21}=1$, then \eqref{eq:3d-O-equ1} implies $a_{11}-a_{33}+a_{23}a_{31}=0$. Then  \eqref{eq:3d-O-equ2}  and \eqref{eq:3d-O-equ3} imply  $a_{12}+a_{33}^2+a_{23}a_{32}=0$.  Then there are two subcases: $(d_{1121}): a_{23}=0$, and $(d_{1122}): a_{23}\neq 0$.

$(d_{1121})$ If $ a_{23}=0$,  then \eqref{eq:3d-O-equ1} implies $a_{11}=a_{33}$. Then  \eqref{eq:3d-O-equ2}  and \eqref{eq:3d-O-equ3} imply  $a_{12}=-a_{33}^2$.  We get $O_{22}$.

$(d_{1122})$ If $ a_{23}\neq0$,  taking  $a_{12}= a_{21}=a_{23}=a_{33}=1$,
 then \eqref{eq:3d-O-equ1} implies $a_{31}=1-a_{11}$.  Then  \eqref{eq:3d-O-equ2}  and \eqref{eq:3d-O-equ3} imply  $a_{32}=-2$, we get $O_{23}$.

$(d_{12})$ If $ a_{22} \neq -a_{33}$, taking $ a_{11}=a_{12}=1$, $ a_{33}=-1$,  then \eqref{eq:3d-O-equ1} implies $2a_{21}+a_{23}a_{31}-a_{22}+1=0$,  \eqref{eq:3d-O-equ2}  and \eqref{eq:3d-O-equ3} imply  $a_{21}+a_{23}a_{32}+1=0.$    Then there are two subcases: $(d_{121}): a_{23}=0$, and $(d_{122}): a_{23}\neq 0$.

$(d_{121})$ If $a_{23}=0$, then  \eqref{eq:3d-O-equ2}  and \eqref{eq:3d-O-equ3} imply   $a_{21}=-1$.
Then \eqref{eq:3d-O-equ1} implies $a_{22}=-1$,  we get $O_{24}$.

$(d_{122})$ If $a_{23} \neq 0$, taking $a_{21}=1$, then  \eqref{eq:3d-O-equ2}  and \eqref{eq:3d-O-equ3} imply   $a_{32}=-\dfrac{2}{a_{23}}.$ Then \eqref{eq:3d-O-equ1} implies $a_{22}=3+a_{23}a_{31}$,  we get $O_{25}$.

$(d_2)$ If $ a_{13}\neq0$, taking $a_{12}=a_{13}=1$, then \eqref{eq:3d-O-equ1} implies $a_{11}a_{21}-a_{21}a_{33}+a_{23}a_{31}+a_{21}a_{32}+a_{11}a_{31}-a_{22}a_{31} -a_{11}^2(a_{22}+a_{33})=0$. Then   \eqref{eq:3d-O-equ2}  and \eqref{eq:3d-O-equ3} imply $a_{11}a_{22}-a_{21}-a_{31}+a_{33}a_{11}+a_{22}a_{33}-a_{23}a_{32}=0 $. There are two subcases: $(d_{21}): a_{21}=0$, and $(d_{22}): a_{21}\neq 0$.

$(d_{21})$ If $a_{21}=0$, then \eqref{eq:3d-O-equ1} implies $a_{23}a_{31}+a_{11}a_{31}-a_{22}a_{31} -a_{11}^2(a_{22}+a_{33})=0$.
Then   \eqref{eq:3d-O-equ2}  and \eqref{eq:3d-O-equ3} imply $a_{11}a_{22}-a_{31}+a_{33}a_{11}+a_{22}a_{33}-a_{23}a_{32}=0 $.
 Then there are two subcases: $(d_{211}): a_{22}=-a_{33}$, and $(d_{212}): a_{22} \neq -a_{33}$.

$(d_{211})$ If $ a_{22}=-a_{33}$, then \eqref{eq:3d-O-equ1}  implies $a_{23}a_{31}+a_{11}a_{31}-a_{22}a_{31} =0$. Then   \eqref{eq:3d-O-equ2}  and \eqref{eq:3d-O-equ3} imply  $-a_{31}-a_{33}^2-a_{23}a_{32}=0 $.

If $ a_{31}=0$, then \eqref{eq:3d-O-equ1} holds.  Then   \eqref{eq:3d-O-equ2}  and \eqref{eq:3d-O-equ3} imply $a_{32}=-\dfrac{a_{33}^2}{a_{23}} $, we get $O_{26}$.

If $ a_{31} \neq 0$,  taking $a_{23}=0$, then \eqref{eq:3d-O-equ1}  implies $a_{22}=a_{11}=1 $. Since $ a_{22}=-a_{33}$, thus we get $ a_{33}=-1$.  \eqref{eq:3d-O-equ2}  and \eqref{eq:3d-O-equ3} imply $a_{31}=-a_{33}^2=-1 $, then we get $O_{27}$.

$(d_{212})$ If $ a_{22} \neq-a_{33}$, taking $a_{11}=-1$, then \eqref{eq:3d-O-equ1} implies $a_{31}(a_{23}-a_{22}-1) -(a_{22}+a_{33})=0$.
Then   \eqref{eq:3d-O-equ2}  and \eqref{eq:3d-O-equ3} imply  $-a_{22}-a_{31}-a_{33}+a_{22}a_{33}-a_{23}a_{32}=0 $.

If $ a_{31}=0$, \eqref{eq:3d-O-equ1} implies $a_{22}=-a_{33}$, it  is in contradiction with $ a_{22} \neq-a_{33}$.

If $ a_{31} \neq 0$,  taking $a_{31}=-1$,  then \eqref{eq:3d-O-equ1} implies $a_{23}=1-a_{33}$. Then   \eqref{eq:3d-O-equ2}  and \eqref{eq:3d-O-equ3} imply  $a_{32}=1-a_{22} $, we get $O_{28}$.

$(d_{22})$ If $a_{21}\neq 0$, taking $ a_{22}=-a_{33}$,  then \eqref{eq:3d-O-equ1} implies $a_{11}a_{21}-a_{21}a_{33}+a_{23}a_{31}+a_{21}a_{32}+a_{11}a_{31}+a_{33}a_{31} =0$. Then   \eqref{eq:3d-O-equ2}  and \eqref{eq:3d-O-equ3} imply $a_{21}+a_{31}+a_{33}^2+a_{23}a_{32}=0 $. Then there are two subcases: $(d_{221}): a_{31}=0$, and $(d_{222}): a_{31} \neq 0$.

$(d_{221})$ If $ a_{31}=0$, then \eqref{eq:3d-O-equ1} implies $a_{11}a_{21}+a_{21}a_{33}+a_{21}a_{32}=0$.  Then   \eqref{eq:3d-O-equ2}  and \eqref{eq:3d-O-equ3} imply  $a_{21}+a_{33}^2+a_{23}a_{32}=0 $.

Since $ a_{21}\neq 0$, taking $ a_{11}=0$,  then \eqref{eq:3d-O-equ1} implies $a_{32}=-a_{33}$.  Then   \eqref{eq:3d-O-equ2}  and \eqref{eq:3d-O-equ3} imply $a_{21}=a_{33}(a_{23}-a_{33}) $, we get $O_{29}$.

$(d_{222})$ If $ a_{31}\neq0$, $ a_{21}\neq0$, taking $a_{11}= a_{21}=a_{31}=1$,  $ a_{33}=0$,  then \eqref{eq:3d-O-equ1} implies $a_{32}=-a_{23}-2$. Then   \eqref{eq:3d-O-equ2}  and \eqref{eq:3d-O-equ3} imply $a_{23}a_{32}=-2 $.

If $ a_{23}=-1+\sqrt{3}$,  then  $ a_{32}=-1-\sqrt{3}$, we get  $O_{30}$.

If $ a_{23}=-1-\sqrt{3}$, then  $ a_{32}=-1+\sqrt{3}$, we get  $O_{31}$. This finishes the proof.

\end{document}